\newcommand{\NN}{\mathbb{N}}
\newcommand{\ZZ}{\mathbb{Z}}
\newcommand{\RR}{\mathbb{R}}
\newcommand{\PP}[1]{\mathbb{P}\left( #1 \right)}
\newcommand{\EE}[1]{\mathbb{E} \left[ #1 \right]}
\newcommand{\de}{\mathrm{d}}
\newcommand{\cov}[1]{\, {\rm cov}\left( #1 \right) }
\newcommand{\var}[1]{\, {\rm var}\left( #1 \right) }
\DeclareMathOperator*{\argmin}{argmin} % no space, limits underneath in displays
\DeclareMathOperator*{\argmax}{argmax} % no space, limits underneath in displays
\newtheorem{lemma}{Lemma}[section]
\newtheorem{proposition}[lemma]{Proposition}
\theoremstyle{definition}
\theoremstyle{remark}
\theoremstyle{theorem}
\newcommand{\periodogram}[2]{I}%{I_{#1,#2}}
\newcommand{\acf}{c}
\newcommand{\varhat}[1]{\, {\rm \hat{var}}\left( #1 \right) }
\newcommand{\covhat}[1]{\, {\rm \hat{cov}}\left( #1 \right) }
\newcommand{\expbit}{\exp\left \{-\frac{r}{s}\left(\frac{\omega}{\omega_p}\right)^{-s}\right \}}
\newcommand{\parbit}{\omega\mid\theta}
\newcommand{\minusparbit}{-\omega\mid\theta}
\newcommand{\dbit}{\delta(\parbit)}
\newcommand{\sigbit}{\sigma(\parbit)}
\newcommand{\fbit}{f_G(\parbit)}
\newcommand{\Sbit}{S_G(\parbit)}
\newcommand{\pard}[1]{\frac{\partial}{\partial #1}}
\newcommand{\negderiv}[1]{\pard{#1}\fbit&=\pard{#1}f_G(\minusparbit)}
\newcommand{\XN}{\bX_{\Delta,N}}
\newcommand{\bX}{\boldsymbol{X}}
\newcommand{\matlab}{MATLAB }
\begin{document}

\title{Estimating the parameters of ocean wave spectra}
\author[1]{Jake P. Grainger}
\author[2]{Adam M. Sykulski}
\author[2,3]{Philip Jonathan}
\author[4]{Kevin Ewans}
\affil[1]{STOR-i Centre for Doctoral Training, Department of Mathematics and Statistics, Lancaster University, Lancaster, UK}
\affil[2]{Department of Mathematics and Statistics, Lancaster University, Lancaster,UK} 
\affil[3]{Shell Research Ltd., London, UK} 
\affil[4]{MetOcean Research Ltd., New Plymouth, New Zealand}

\date{}

\maketitle

\begin{abstract}
    Wind-generated waves are often treated as stochastic processes.
    There is particular interest in their spectral density functions, which are often expressed in some parametric form.
    Such spectral density functions are used as inputs when modelling structural response or other engineering concerns.
    Therefore, accurate and precise recovery of the parameters of such a form, from observed wave records, is important.
    Current techniques are known to struggle with recovering certain parameters, especially the peak enhancement factor and spectral tail decay.
    We introduce an approach from the statistical literature, known as the de-biased Whittle likelihood, and address some practical concerns regarding its implementation in the context of wind-generated waves. 
    We demonstrate, through numerical simulation, that the de-biased Whittle likelihood outperforms current techniques, such as least squares fitting, both in terms of accuracy and precision of the recovered parameters. We also provide a method for estimating the uncertainty of parameter estimates.
    We perform an example analysis on a data-set recorded off the coast of New Zealand, to illustrate some of the extra practical concerns that arise when estimating the parameters of spectra from observed data.
\end{abstract}

\section{Introduction}
Due to the random nature of wind-generated gravity waves, it is common to treat them as stochastic processes. 
There is particular interest in the spectral density function of such wave processes.
For this reason, it is important that we are able to construct good spectral density estimators.
Using such an estimator, estimates of the spectral density function can be obtained from observed wave records.
Broadly speaking, there are two approaches for obtaining such an estimator: {non-parametric} and {parametric}. 
The most basic {non-parametric} spectral density estimator is the periodogram --- the Fourier transform of the sample autocovariance. 
However, the periodogram is a noisy estimator. Therefore many less noisy estimators, such as that of \cite{Bartlett1948}, have been developed. 
The second approach is to use a {parametric} spectral density estimator. Here we assume that the spectral density function follows a parametric form, meaning that the inference task becomes estimation of the parameters of this form.
In general, parametric estimators are often preferable because they result in smoother estimates and more concise representations of the spectral density function---and the parameters themselves provide physical interpretation of the nature of the wave process.
\par
Many such parametric forms have been developed in the oceanography literature.
\cite{Phillips1958} gave theoretical arguments for the tail behaviour of the spectral density function for wind-generated wave processes.
Based on this, \cite{Pierson1964} established a parametric form that characterised the spectral density function of a fully developed sea, describing both the spectral tail and peak behaviour.
This was later extended by \cite{Hasselmann1973}, so that the parametric form could encompass a wider variety of spectral density functions, including those associated with fetch limited wave processes.
This widely used parametric form is usually known as the JONSWAP spectral form. It should be noted that we use JONSWAP to refer to the original formulation given by \cite{Hasselmann1973}, with a tail decay of $\omega^{-5}$ (where $\omega$ denotes angular frequency).
\par
Despite general acceptance of the JONSWAP spectral form amongst practitioners, there is debate concerning the values of the tail decay index and peak enhancement factor.
Arguments for an $\omega^{-5}$ tail decay, made by \cite{Phillips1958}, were called into question by \cite{Toba1973} and later by \cite{Phillips1985}, who argued that an $\omega^{-4}$ tail had a stronger theoretical basis. Experimental work such as \cite{Hasselmann1973} and \cite{Battjes1987a} found evidence for both $\omega^{-4}$ and $\omega^{-5}$ tail decays, while \cite{Hwang2017} could not find evidence for either, further suggesting that the tail decay index should be treated as a free parameter.
In addition, there is a large literature speculating on other tail behaviours, such as the occurrence of a transition frequency from $\omega^{-4}$ to $\omega^{-5}$ \citep[for example]{Forristall1981,Ewans1986,Babanin2010}.
It is also common to fix the peak enhancement factor to 3.3; however, there is little evidence for using precisely this value.
In this work, we adopt a more general version of the JONSWAP spectral form, which treats both the tail decay index and peak enhancement factor as free parameters
(though our methods also apply to the special cases mentioned, in terms of estimating the remaining parameters of interest).\footnote{To avoid potential confusion, it should be noted that we are interested in estimating the parameters of assumed parametric forms for the spectral density function (such as the generalised JONSWAP) and not spectral parameters of a sea state such as significant wave height ($H_s$).}
Many authors \citep[for example]{Rodriguez1999,Ewans2018} have found that both the tail decay index and peak enhancement factor are hard to estimate accurately, using current techniques.
However, both of these parameters are important for determining the properties of a given sea state.
Our contention is that current techniques are not sufficiently accurate or precise to allow strong statements to be made concerning the true values of the tail decay index or peak enhancement factor, from typical data sets.
Indeed, in Section \ref{sec:sim} we demonstrate with simulated half hour records that estimates for the tail decay (using current estimation techniques) range from $\omega^{-3}$ to $\omega^{-6}$, when the true tail decay is known to be $\omega^{-4}$. 
Because there is too much variability in the estimates, it is impossible to determine from data of such lengths if the true tail decay is $\omega^{-4}$ or $\omega^{-5}$.
In this work we present an alternative technique that is capable of obtaining these parameters more accurately and precisely, and show in simulated data that this technique can distinguish $\omega^{-4}$ and $\omega^{-5}$ tail decays, even from short records.
\par
The standard approach for estimating parameters of a stochastic model from data is by using maximum likelihood inference.
When an analytical form for the likelihood function is known, such parameters can be optimally estimated using maximum likelihood \citep{Pawitan2001}.
However, in the case of wind-generated wave processes, the exact probability distribution is unknown.
Though it is possible to make the simplifying assumption that the wave process is Gaussian, for many sea states this assumption will not be reasonable.
For this reason, it has become common for oceanographers to use a non-parametric estimator of the spectral density function, and obtain parameters by fitting a parametric form in the least squares sense. However, such least squares estimators will in general be sub-optimal when compared to full maximum likelihood \citep{Constable1988}.
\par
We therefore turn to frequency domain likelihoods, which are widely used in both time series analysis and spatial statistics \citep[for example]{Nordman2006,Fuentes2007}. 
The canonical approach is to use an approximation to maximum likelihood known as the Whittle likelihood \citep{Whittle1953}.
The Whittle likelihood can be computed quickly using Fast Fourier Transforms and does not require Gaussianity \citep{Dzhaparidze1983}. However, the Whittle likelihood has been shown to produce biased estimates for small sample sizes \citep{Dahlhaus1988,Velasco2000}.
\cite{Sykulski2016a} developed a de-biased version of the Whittle likelihood that corrects for this bias, without sacrificing the computational speed or making extra distributional assumptions.
In Section \ref{sec:fitting:comparison}, we will provide some intuition as to why we would expect the de-biased Whittle likelihood to perform better than least squares, both in terms of accuracy (bias) and precision (variance). Then in Section \ref{sec:sim}, we evidence this claim using numerical simulations.
\par
The contributions of this paper are as follows.
Firstly, we introduce the de-biased Whittle likelihood estimator for use on wind-generated wave processes (Section~\ref{sec:fitting}). Secondly, we detail practical concerns regarding the implementation of the estimation procedure for wind-generated ocean wave processes (Section~\ref{sec:practical}), with accompanying \matlab code provided on GitHub \citep{Grainger2021a}. This includes an important generalisation of the \cite{Sykulski2016a} procedure to allow parameters to be fitted directly to the proposed spectral form {\em without} having to posit an analytical form for the time-domain theoretical autocovariance sequence---as required in \cite{Sykulski2016a}, but unavailable for ocean wave spectral forms. Thirdly, we present a novel reformulation of the variance of the de-biased Whittle likelihood estimator which can be used to quantify the uncertainty of parameter estimates (Section~\ref{sec:confidence}). Finally, we perform a detailed simulation and field data study comparing the performance of different parametric spectral density estimators for wind-generated wave processes (Section~\ref{sec:sim} and Section~\ref{sec:application}).
%% ====================== end of introduction section ====================== %%

\section{Background}\label{sec:background}
So far we have used the word ``wave'' loosely to describe the shifting nature of the sea surface. In truth, we are actually interested in modelling the displacement of the sea surface from the resting surface. Of course, in reality this is a 3-dimensional phenomena, but in this paper we shall consider the vertical displacement of the surface over time at a specific location in space.
We can think of the displacement at a given time as being a \textit{random variable} with some distribution. Therefore we can describe the displacement over time by a \textit{stochastic process}, an indexed family of random variables, which we shall denote $\bX=\{X_t\}_{t\in \RR}$. Note that this is a family of random variables indexed over continuous time, as the actual physical process is constantly changing. However, since we cannot actually record data continuously in time, we must instead settle for recording the process at discrete points in time. We assume that the data are being sampled regularly and denote the sampling interval $\Delta$ and the process that arises from sampling $\bX$ every $\Delta$ seconds we shall call $\bX_\Delta=\{X_{t\Delta}\}_{t\in\ZZ}$.
\par
For the duration of a given record, observations of the sea surface are usually assumed to be from an underlying process $X$ that is \textit{second-order stationary}. This means $\bX$ satisfies all of the following conditions:
\begin{enumerate}
    \item $\EE{X_t}=\EE{X_0}$,
    \item $\EE{|X_t|^2}<\infty$,
    \item $\EE{X_tX_s}=\EE{X_{t-s}X_0}$,
\end{enumerate}
for all $t,s\in \RR$. 
However, the sea surface is not actually stationary: it evolves over time. One way to circumvent this is to notice that whilst the conditions at sea do evolve over time, they do so relatively slowly if we sample frequently. Therefore, we treat the sea surface as being stationary over short time intervals, sometimes known as sea states. This is essentially the same approach as locally stationary modelling in time series analysis \citep{Dahlhaus2012}. Ideally we would make this sea state as short as possible. However, we must balance this with another fundamental statement: the more observations we have, the more confident we can be in our inferences.
The question of the correct time interval to use will not be covered here; though, it is useful to keep in mind that improving the precision of parameter estimates will mean that we could use shorter sea states in our analysis. This would allow us to track the evolution of certain meteorological processes, such as tropical cyclones, at a higher precision and resolution.
To summarise, the underlying wind-generated wave process is not second-order stationary; however, for short enough time windows, this is a widely used working assumption that allows some inference to be made about the process in question.
\par
The analysis of second-order stationary stochastic processes usually involves two important characteristics: the \textit{autocovariance} and the \textit{spectral density function}. The autocovariance of a process at a given lag $\tau$, is just the covariance of a process with itself $\tau$ time-steps later. More formally, the autocovariance is $\acf(\tau) = \EE{X_{\tau}X_{0}}-\EE{X_\tau}\EE{X_0}$. For our purposes, we assume that $\EE{X_t}=0$ for all $t\in\RR$; noting that if this is not the case, then by first removing the mean of the data we can obtain a process with the desired property. Therefore, the autocovariance simplifies to $\acf(\tau)=\EE{X_{\tau}X_0}$.
The spectral density function is a frequency domain analogue of the autocovariance, which for the stochastic processes $\bX$ and $\bX_\Delta$ we shall denote $f(\omega)$ and $f_\Delta(\omega)$ respectively. A formal construction of the spectral density function can be found in, for example, \citet[page 118]{Brockwell2006} (for the discrete time process), or \citet[page 522]{Doob1953} (for the continuous time process). However, for our purposes it suffices to note the following relations. Firstly for the discrete time process,
\begin{align}
    f_\Delta(\omega)&=\frac{\Delta}{2\pi}\sum_{\tau=-\infty}^\infty \acf(\tau\Delta)\exp\{-i\omega \tau\Delta\},\label{eq:sdfvsacf}
\end{align}	
for $\omega\in[-\pi/\Delta,\pi/\Delta]$, where $\pi/\Delta$ is the \textit{Nyquist frequency} and is the highest observable frequency of the sampled process. Secondly for the continuous time process,
\begin{align}
    f(\omega)&=\frac{1}{2\pi}\int_{-\infty}^\infty \acf(\tau)\exp\{-i\omega \tau\}\de\tau,\label{eq:sdfvsacf:continuous}
\end{align}
for almost every $\omega\in\RR$ (i.e. equal except on a set of measure zero).\footnote{Note that we are working with angular frequency here, and for all examples in this paper this is measured in units of $\text{rad s}^{-1}$.} Similarly, the inverse relations are
\begin{align}
    \acf(\tau\Delta)&=\int_{-\pi/\Delta}^{\pi/\Delta} f_\Delta(\omega) \exp\{i\omega\tau\Delta\}\de\omega,\label{eq:acf:discrete}
\end{align}
for $\tau\in\ZZ$ and
\begin{align}
    \acf(\tau)&=\int_{-\infty}^\infty f(\omega)\exp\{i\omega\tau\}\de\omega,
    \label{eq:acf:continuous}
\end{align}
for $\tau\in\RR$.
The spectral density of the discrete time process, $f_\Delta(\omega)$, can be thought of as an aliased version of the continuous time spectral density function $f(\omega)$. More formally, we have the following relation:
\begin{align}
    f_\Delta(\omega)&=\sum_{k=-\infty}^\infty f\left(\omega+\frac{2\pi k}{\Delta}\right),
    \label{eq:sdf:aliasing}
\end{align}
for $\omega\in[-\pi/\Delta,\pi/\Delta]$  \citep[chapter 4]{Percival1993}. In Section  \ref{sec:sim}, we demonstrate that aliasing can cause bias in parameter estimation, which is why it is important to define both $f(\omega)$ and $f_\Delta(\omega)$ and understand their relationship.
\subsection{Non-parametric spectral density estimators} \label{sec:background:non}
    Though our purpose is the analysis of parametric spectral density estimators, it is also pertinent to define some of the non-parametric spectral density estimators that are used throughout this paper.
    There are two important properties that should be considered when choosing an estimator. The first of these is \textit{bias}, which is the expectation of the estimator minus the true value. Ideally we would want to choose an estimator that is unbiased, i.e. has a bias of zero. This is often not possible, but the weaker condition of asymptotically unbiased is often achievable. An estimator is said to be asymptotically unbiased if, as the number of observations increases, the bias tends to zero.
    The second important property is \textit{consistency}. For an estimator to be consistent it must converge in probability to the true parameter as the number of observations tends to infinity. 
    More formally, denote the true parameter by $\theta_0$ and an estimator from a series of $N$ observations by $\hat\theta_N$. Then $\hat\theta_N$ is a consistent estimator if, for all $\epsilon>0$, $\PP{|\hat\theta_N-\theta_0|>\epsilon}\rightarrow0$ as $N\rightarrow\infty$.
    \par
    The most basic non-parametric estimator for the spectral density function of a discrete time process is the \textit{periodogram}. Let $\XN=\{X_0,X_\Delta,\ldots,X_{\Delta(N-1)}\}$ be a series of $N$ consecutive random variables from $X_\Delta$, then the periodogram is defined as
    \begin{align*}
        \periodogram{\Delta}{N}(\omega)&=\frac{\Delta}{2\pi N}\left|\sum_{t=0}^{N-1}X_{\Delta t} \exp\{-it\Delta\omega\}\right|^2,
    \end{align*}
    for $\omega\in\RR$. In practice, the periodogram is typically only evaluated at the Fourier frequencies
    $\omega=2\pi j/\Delta N$ using the FFT procedure, where $j=-\lceil N/2\rceil+1,\ldots,\lfloor N/2\rfloor$.
    For convenience, we shall write $\Omega_{N,\Delta}$ for the set of these frequencies.
    It should also be noted that the periodogram is an estimator for the spectral density of the discrete time process $f_\Delta(\omega)$, not the spectral density of the continuous time process $f(\omega)$.
    The periodogram can be shown to be an asymptotically unbiased estimator for $f_\Delta(\omega)$, but the periodogram is not consistent. 
    \par
    For this reason, modified versions of the periodogram, which are consistent, are usually used as an alternative to the periodogram. 
    One such modified periodogram, suggested by \cite{Welch1967}, splits the series into smaller segments, applies a window, calculates the periodogram of each segment and then averages these modified periodograms at each frequency. In practice, Welch's method results in an estimate that is less noisy than a standard periodogram, but has lost resolution in frequency and may be more biased.
    A subset of such methods is known as \textit{Bartlett's method} \citep{Bartlett1948}.
    This approach uses non-overlapping segments with no window function. In other words, Bartlett's estimator is
    \begin{align}
        I_B(\omega) &= 
        \frac{\Delta}{2\pi PL}\sum_{p=1}^P\left|\sum_{t=(p-1)L+1}^{pL} X_{\Delta t} \exp\{-it\Delta\omega\}\right|^2,
    \end{align}
    where $P$ is the number of segments and $L$ is the number of observations in each segment (with $PL\leq N$).
\subsection{Models for the sea surface}\label{sec:ocean_model}
When describing the sea surface, models are often expressed in terms of the spectral density function.
Many different spectral density functions have been developed for ocean waves, perhaps most notably the JONSWAP spectra, developed by \cite{Hasselmann1973}. We shall consider a more general model, which encompasses many of the other waves models that have been developed. Following \cite{Mackay2016}, we use the following parametrisation:
\begin{align}
	\Sbit&=\alpha\omega^{-r}\exp\left \{-\frac{r}{s}\left(\frac{\omega}{\omega_p}\right)^{-s}\right \}\gamma^{\dbit},
	\label{eq:JONSWAP}
\end{align}
where
\begin{align*}
	\dbit&=\exp\left\{-\frac{1}{2\sigbit^2}\left (\frac{\omega}{\omega_p}-1\right )^2\right\},
\end{align*}
and
\begin{align*}
\sigbit&=\begin{cases}
	\sigma_1\quad\text{for }\omega\leq\omega_p\,,\\
	 \sigma_2\quad\text{for }\omega>\omega_p\,,
\end{cases}	
\end{align*}
for $\omega >0$; where $\alpha,\omega_p,s>0$, $\gamma\geq1$, $r>1$\footnote{\cite{Mackay2016} gives the condition that $r>0$. However, for the spectral density to be integrable (such that the stochastic process has finite variance), we require that $r>1$.} and $\theta$ denotes the vector of parameters. Typically, and for the remainder of this paper, $\sigma_1,\sigma_2$ and $s$ are set to $0.07,0.09$, and $4$ respectively \citep{Mackay2016}.
In this case, the parameter vector is $\theta=(\alpha,\omega_p,\gamma,r)$.
Also let $\Theta$ denote the \textit{parameter space} --- the set of possible values that $\theta$ can take. Then for this general model, the parameter space is  $\Theta=(0,\infty)\times(0,\infty)\times[1,\infty)\times(1,\infty)\subseteq\RR^4$.
Note that~\eqref{eq:JONSWAP} is a one sided spectral density, and is not defined at $\omega=0$. We shall work with the two sided version as this fits in with the way we have defined the spectral density function, the way techniques are described in the statistical literature, and the way Fast Fourier Transforms are implemented on a computer. Therefore, we define $f_G:\RR\times\Theta\rightarrow[0,\infty),$
\begin{align}
    \fbit&=
        \begin{cases}
	        \Sbit/2 & \text{for }\omega>0,\\
	        0 & \text{for }\omega=0,\\
	        S_G(\minusparbit)/2  & \text{for }\omega<0.
	    \end{cases}\label{eq:genJONSWAP}
\end{align}
We shall refer to the function defined by \eqref{eq:genJONSWAP} as the generalised JONSWAP spectral form. In this formulation, $\alpha$ is measured in units of $\text{m}^2 \text{ s}^{1-r} \text{ rad}^{r-1}$, $\omega_p$ in $\text{rad s}^{-1}$ and $\gamma$ and $r$ are dimensionless. For convenience, we omit the units in future references.
%% ====================== end of background section ====================== %%
\section{Fitting parametric spectral density functions}\label{sec:fitting}
The process of fitting a parametric spectral density function to observations can be thought of as estimating the parameters of a statistical model, which we denote $\theta$. 
The techniques discussed in this section are applicable to a broad class of spectral density functions. As such, we consider the general case and shall write $f(\omega\mid\theta)$ for the spectral density function of the continuous time process, given some choice of parameters $\theta$. We shall also write $f_\Delta(\omega\mid\theta)$ and $\acf(\tau\mid\theta)$ for the spectral density function of the discrete time process and the autocovariance function respectively. For convenience, we shall sometimes refer to the spectral density function of the continuous time process as the spectral density function, and the spectral density function of the discrete time process as the aliased spectral density function.
We also write $\Sigma_\theta$ for the covariance matrix of the multivariate random variable corresponding to $N$ consecutive random variables from $X_\Delta$. We now describe each of the fitting methods discussed in this paper.
    \subsection{Least squares}
    Current approaches to estimating parameters of spectral density functions used in the ocean waves literature, such as the approaches described by \cite{Ewans2018}, usually involves two key steps.
    Firstly, a non-parametric estimator of the spectral density function is constructed.
    Secondly, a curve fitting algorithm is used so that the corresponding parametric form is a good fit for the observed data. 
    Typically this involves minimising the square distance between the parametric form and non-parametric spectral density estimator. As such, we shall refer to such approaches as least squares fitting techniques.
    \par
    For the purpose of this section, we let $\bar I(\omega)$ denote a general non-parametric spectral density estimator (this could be the periodogram, $I(\omega)$, Bartlett estimator, $I_B(\omega)$, or some other non-parametric spectral density estimator).
    The second part of this fitting routine involves fitting the parametric form to the obtained non-parametric estimator. Typically, this is done by minimising the Euclidean distance between the non-parametric estimator and the parametric spectral density function. We therefore must minimise the objective function given by
    \begin{align}
        \ell_{LS}\left(\theta\mid\XN\right)&=\sum_{\omega\in\Omega} \left(f(\omega\mid\theta)-\bar I(\omega)\right)^2,\label{eq:LS}
    \end{align}
    where $\Omega\subseteq\Omega_{N,\Delta}$ (the choice of $\Omega$ is discussed in Section  \ref{sec:freq_selection}).
    In other words, the least squares estimator for $\theta$ is defined as $\hat \theta_{LS}=\argmin_{\theta\in\Theta} \ell_{LS}\left(\theta\mid\XN\right)$.
    This approach could be adapted to account for aliasing by replacing $f(\omega\mid\theta)$ with $f_\Delta(\omega\mid\theta)$, the aliased spectral density function; however, such a modified approach is not currently used in the ocean waves literature and therefore we shall use the form given by \eqref{eq:LS} in our simulation study.
    \par
    
    Part of the reason that least squares performs poorly is that the variance of a spectral estimate will be different at different frequencies.
    This means that low density areas of the spectral density function (such as the high frequency tail) tend to be under-weighted.
    For this reason, log transforms are often used in least squares objective functions, especially in the statistics literature \citep{bloomfield1973}.  
    Because standard least squares is widely used in the ocean waves literature, we present a comparison of standard least square in this paper.
    However, in simulations not shown in this paper, log least squares still does not perform as well as the de-biased Whittle likelihood. 
    Whilst log least squares does provide better estimates of the spectral tail decay index than standard least squares, some of the other parameter estimates have increased bias when compared to standard least squares.
    Plots of these log least squares simulations are available on GitHub \citep{Grainger2021a}.
    
    \subsection{Maximum likelihood}
    Maximum likelihood inference treats the sea surface data as observations of a random variable with a given distribution. The parameters for this distribution are chosen by maximising the probability of observing the data given that the underlying distribution has certain parameters. 
	For the moment, let the sea surface observations be multivariate Gaussian with expectation zero and an unknown covariance matrix $\Sigma_\theta$.
	The log-likelihood function for observations of such a process is
	\begin{align}
	    \ell_{ML}\left(\theta\mid\XN\right)&= \frac{1}{2}\left(-N\log(2\pi)-\log|\Sigma_\theta|-\XN^{T}\Sigma_\theta^{-1}\XN\right) ,\label{eq:ML}
	\end{align}
	where $\XN^T$ denotes the transpose of $\XN$.
	The maximum likelihood estimator is then obtained by maximising the log-likelihood function. More formally, the maximum likelihood estimator of $\theta$ is
	$
        \hat \theta_{ML}=\argmax_{\theta\in\Theta} \ell_{ML}\left(\theta\mid\XN\right)
    $.
	Provided that the underlying random variable is actually multivariate Gaussian, this technique will provide asymptotically optimal estimates of $\theta$, in the sense that they converge at an optimal rate (see \cite{Pawitan2001}, chapter 8.5, for more details).
	\par
	This approach can be computationally expensive because evaluating the objective function given by \eqref{eq:ML} requires the inversion of an $N\times N$ matrix. 
	Also, if we want to model a distribution that is not Gaussian, then a different log-likelihood function must be used. 
	This may take significantly longer to compute, or may not even be tractable.
	As previously discussed, wave processes will not typically be precisely Gaussian. However, in Section~\ref{sec:sim} we shall compare fitting techniques on simulated Gaussian processes in the first instance. In this case, full maximum likelihood provides a useful benchmark to compare the performance of other estimators to the optimal choice of estimator.

    \subsection{Spectral-likelihood}
    To avoid some of the problems associated with maximum likelihood estimation we can use approximations to the likelihood, known as pseudo- or quasi-likelihoods, to gain some of the accuracy and precision of maximum likelihood, while keeping computational costs (and distributional assumptions) low.
    One such quasi-likelihood\footnote{These likelihoods are usually referred to as quasi-likelihoods. However, we also use the term spectral-likelihood as it integrates nicely with current terminology used in the literature, as well as giving an intuitive sense of what a spectral-likelihood does.} is known as the Whittle likelihood \citep{Whittle1953}. The Whittle likelihood has been used in a wide range of applications due to its computational efficiency and fairly free distributional assumptions (in particular, we no longer need to assume that the underlying process is Gaussian).
    In its discretised form, the Whittle likelihood is
	\begin{align}
	    \ell_{W}\left(\theta\mid\XN\right)&=-\sum_{\omega\in\Omega} \left\{ \log\left(f(\omega\mid\theta)\right)+\frac{\periodogram{\Delta}{N}(\omega)}{f(\omega\mid\theta)}\right\},\label{eq:WL}
	\end{align}
	where $I(\omega)$ denotes the \textit{periodogram} ordinate at angular frequency $\omega$. The corresponding estimator is again obtained by maximising this spectral-likelihood, which we shall denote by
	$
        \hat \theta_{W}=\argmax_{\theta\in\Theta} \ell_{W}\left(\theta\mid\XN\right)
    $.
	This estimator also does not account for aliasing. However, by replacing $f(\omega\mid\theta)$ with $f_\Delta(\omega\mid\theta)$ in \eqref{eq:WL}, we obtain an estimator that does account for aliasing. 
	We shall refer to this as the aliased Whittle likelihood, though it should be noted that some authors refer to this as simply the Whittle likelihood.
	\par
	Though this aliased approach accounts for some of the bias in the Whittle likelihood, other forms of bias introduced through phenomena such as blurring are still present \cite[chapter 6]{Percival1993}. \cite{Sykulski2016a} introduced the de-biased Whittle likelihood to deal with both aliasing and blurring simultaneously. 
	The de-biased Whittle likelihood is
	\begin{align}
	    \ell_{DW}\left(\theta\mid\XN\right)&=-\sum_{\omega\in\Omega} \left\{ \log\left(\bar f_N(\omega\mid\theta)\right)+\frac{\periodogram{\Delta}{N}(\omega)}{\bar f_N(\omega\mid\theta)}\right\},\label{eq:DW}
	\end{align}
	where $\bar f_N(\omega\mid\theta)=\EE{\periodogram{\Delta}{N}(\omega)\mid\theta}$ is the \textit{expected periodogram}.
    As noted by \cite{Sykulski2016a}, the expected periodogram can be calculated in $O(N\log N)$ time by using the relation:
	\begin{align}
	    \EE{\periodogram{\Delta}{N}(\omega)\mid\theta}&=\frac{1}{2\pi}\text{Re}\Bigg(2\Delta\sum_{\tau=0}^{N-1}\left(1-\frac{\tau}{N}\right)\acf(\tau\mid\theta)\exp\{-i\omega\tau\Delta\}-\Delta\acf(0\mid\theta)\Bigg).\label{eq:EI}
	\end{align}
	The resulting estimator can then be expressed as
	$
        \hat \theta_{DW}=\argmax_{\theta\in\Theta} \ell_{DW}\left(\theta\mid\XN\right).
    $
    Despite being constructed from the periodogram, an inconsistent estimator of the spectral density function, the de-biased Whittle likelihood is a consistent estimator of the parameters for the parametric model. The de-biased Whittle likelihood is able to address the deficiencies in the periodogram without introducing bias, by accounting for the finite sample properties of the periodogram.
    \cite{Sykulski2016a} also show that, under certain conditions, the de-biased Whittle estimator converges optimally.
    These condition are discussed further in Appendix~\ref{append:assumption}.
    \subsection{Comparison}\label{sec:fitting:comparison}
In Section~\ref{sec:sim}, we perform a simulation study to compare each of the estimators that we have discussed.

However, we can also try to build some intuition as to why certain approaches are likely to be more effective than others.
To achieve this we shall consider the conditions in which each technique would be equivalent to full maximum likelihood for a finite sample, then evaluate how likely it is that said assumptions are satisfied.
Note that this is not (and nor is it intended to be) a formal proof; results related to the convergence of de-biased Whittle estimators and their proofs can be found in \cite{Sykulski2016a}.
Rather, this is a sketch of what is going on under the hood that causes the de-biased Whittle likelihood to outperform least squares based techniques.

\par

Maximum likelihood inference works by making probabilistic statements about the distribution of data and then using this to work out the parameter choice that would have been most likely to have given rise to the data in question.
The part of this process of interest to us here is making such distributional statements.
For this comparison, we shall think of the non-parametric spectral density estimates as ``the data'', and shall consider what their distribution would need to be for least squares or the Whittle likelihood to be the optimum likelihood function for this data.\footnote{This differs from full maximum likelihood on the time series as we have lost the phase information in calculating a spectral density estimate.}

\par
Then, for the least squares approach to yield the same parameter estimates as the optimum likelihood function, we would need the non-parametric spectral density estimator used in the fitting routine, $\bar I(\omega)$, to satisfy the following four assumptions.
Firstly, at each frequency, the non-parametric estimator must be Gaussian. In general, this is not true for non-parametric spectral density estimators, though it is true asymptotically for some of them (e.g. Bartlett's method).
Secondly, the expectation of the non-parametric estimator must be equal to the spectral density function at a given frequency.
This is not actually true for non-parametric spectral density estimators, as these are constructed to estimate the aliased spectral density function, not the spectral density function of the continuous time process.
Though this aliasing could be accounted for by modifying the spectral form used in the fitting routine, such modification is not standard practice and many non-parametric spectral density estimators are still biased.
Thirdly, the variance of the non-parametric spectral density estimator must be the same for each frequency.
This is not the case for non-parametric spectral density estimators in general, as the variance at a given frequency depends on the spectral density function at that frequency \citep{Brockwell2006}.
Though weighted least squares approaches, such as the approach proposed by \cite{Chiu1988}, do begin to address the problem of assumption three, they are not widely used and still make the first and second assumptions.
Fourthly, the non-parametric estimators at any two different frequencies must be uncorrelated.
This assumption is discussed further in Section~\ref{sec:fitting:differencing}.
\par
For the Whittle likelihood to yield the same parameter estimates as the optimum likelihood function, we would need the following three assumptions on the periodogram to hold.
Firstly, we would require the periodogram to be exponentially distributed at each Fourier frequency.\footnote{A slightly different assumption is made about the zero and Nyquist frequency, though in practice they are often omitted.}
Secondly, we would require that the expectation of the periodogram is equal to the spectral density function at a given frequency (and consequently that the variance is the square of the spectral density function).
Thirdly, the periodogram at any two different frequencies must be uncorrelated.
At fixed frequencies, the first assumption is true asymptotically for linear processes \citep{Brockwell2006} and for some classes of non-linear processes \citep{shao2007}. At first glance, this may seem to be similar to the asymptotic normality of Bartlett modified periodograms that are often used in least squares.
However, it should be noted that in the case of the periodogram, this asymptotic result is in terms of the number of observations; whereas for Bartlett modified periodograms, this result is in terms of the number of segments that are used, which is much smaller.
When it comes to the second assumption, the periodogram is an asymptotically unbiased estimator of the aliased spectral density function. For this reason, the aliased version of the Whittle likelihood should be used over the standard version. Again it may seem that this is also true for Bartlett modified periodograms, as Bartlett's method averages periodograms and each periodogram is an asymptotically unbiased estimator of the aliased spectral density function.
Therefore if we were to adjust for this aliasing, least squares would be justified.
However, each of these component periodograms are calculated from small segments of the full record, so it is difficult to invoke asymptotic results.
Indeed, this creates somewhat of a catch-22 for Bartlett least squares: to get asymptotic normality we must average many periodograms; but this results in using shorter segments for each periodogram, introducing bias (and vice versa). 
The de-biased Whittle likelihood \citep{Sykulski2016a} bypasses the second assumption (made by the Whittle likelihood) altogether, as it uses the theoretical expectation of the periodogram in place of the spectral density function.
This means that even for small sample sizes the de-biased Whittle likelihood produces estimates with very small to no bias.
The final assumption, the assumption of independence between frequencies, is required by both least squares and spectral-likelihoods; however, the Whittle likelihood is also in a strong position when it comes to satisfying this assumption. This is because asymptotically the periodogram is uncorrelated at different frequencies, and we are using the longest periodogram possible, given the length of the data.
Of course, least squares techniques could be used on the raw periodogram, meaning that the second and last assumptions are just as likely to be satisfied as when using spectral-likelihoods, but in this case, the asymptotic normality required for least squares will not be satisfied (nor in general will the constant variance assumption).
\par
When it comes to the final assumption for both least squares and spectral-likelihood techniques, there are some practical concerns that should be considered.
In particular, when the aliased spectral density has high dynamic range, the frequencies are often correlated.
As we shall shortly show, in the case of wind-generated waves, this issue does not present itself for 1Hz data. Although, for higher sampling frequencies, such as 4Hz data, the periodogram is often highly correlated.
To solve this problem we can turn to differencing, a technique that is well established for reducing correlations in the periodogram \citep{Velasco2000}.
\subsection{Differencing}\label{sec:fitting:differencing}
If the periodogram is highly correlated across frequencies, spectral-likelihoods will perform poorly when compared to full maximum likelihood \cite{Velasco2000}.
Differencing can sometimes provide a convenient mechanism for removing such correlations.
Define the differenced process as $Y_t=X_{t+\Delta}-X_t$.
We briefly switch notation and let $\acf_X(\tau)$ and $f_X(\omega)$ denote the autocovariance and spectral density function of $X$ at $\tau$ and $\omega$ respectively, and likewise $\acf_Y(\tau)$ and $f_Y(\omega)$ for the differenced process $Y$.
First notice that
\begin{align}
    \acf_Y(\tau) 
    &= \EE{Y_0Y_\tau}\\
    &= \EE{X_\Delta X_{\tau+\Delta}-X_0X_{\tau+\Delta}-X_\Delta X_{\tau}+X_0X_\tau}\\
    &= 2\acf_X(\tau)-\acf(\tau+\Delta)-\acf(\tau-\Delta),
\end{align}
by stationarity.
Then from \eqref{eq:sdfvsacf:continuous} we can see that
\begin{align}
    f_Y(\omega)
    &=\int_{-\infty}^\infty \left(2\acf_X(\tau)-\acf(\tau+\Delta)-\acf(\tau-\Delta)\right) \exp\{-i\tau\omega\}\de\tau\\
    &=2f_X(\omega)-e^{i\omega\Delta}f_X(\omega)-e^{-i\omega\Delta}f_X(\omega)\\
    &=2(1-\cos(\omega\Delta))f_X(\omega)\\
    &=4\sin^2\left(\frac{\omega\Delta}{2}\right)f_X(\omega).\label{eq:differencedsdf}
\end{align}
Therefore, differencing can be easily incorporated into the fitting techniques that have been discussed in this paper, by simply replacing $X$ with the differenced process $Y$ and $f_X$ with $f_Y$ using the relation given by \eqref{eq:differencedsdf}.
Consider the correlation matrix of the periodogram: the matrix with $i,j$th element defined to be the correlation between the periodogram at the $i$th and $j$th Fourier frequencies.
Figure \ref{fig:correlation} shows a plot of the correlation matrix for the periodogram of a wind-generated wave process, estimated using the technique described in Appendix \ref{append:var}.
We can see that for data recorded at a 1Hz sampling rate there is little correlation in the periodogram; however, this is not the case for 4Hz data.
We can also see that the periodogram of the differenced process is almost completely uncorrelated, even for the 4Hz data.\footnote{It should be noted that the region of high correlation in the top left corner of each of the correlation matrices is part of the reason for removing such frequencies from the objective function when performing fits, as discussed further in Section~\ref{sec:freq_selection}.}
From the signal processing perspective, this has reduced the dynamic range of the spectrum as we are multiplying the spectral density function by something that is close to zero for angular frequencies that are small, but is close to one near the Nyquist, down-weighting the peak far more than the tail.
\begin{figure}[htp!]
    \centering
    \begin{subfigure}[t]{0.475\textwidth}
        \centering
        \includegraphics{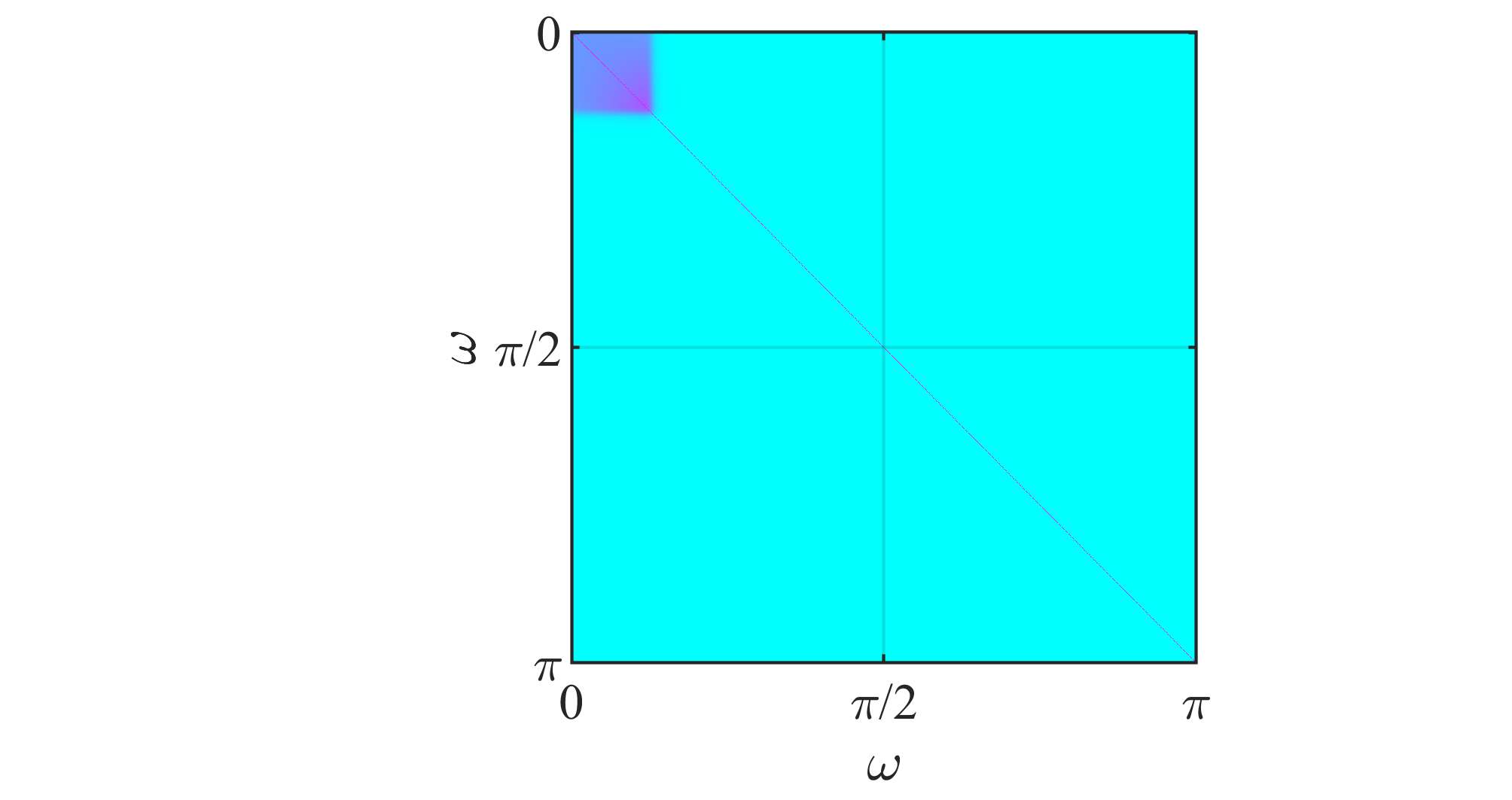}
        \caption{No differencing and $\Delta=1$.}
    \end{subfigure}
    \hfill
    \begin{subfigure}[t]{0.475\textwidth}
        \centering
        \includegraphics{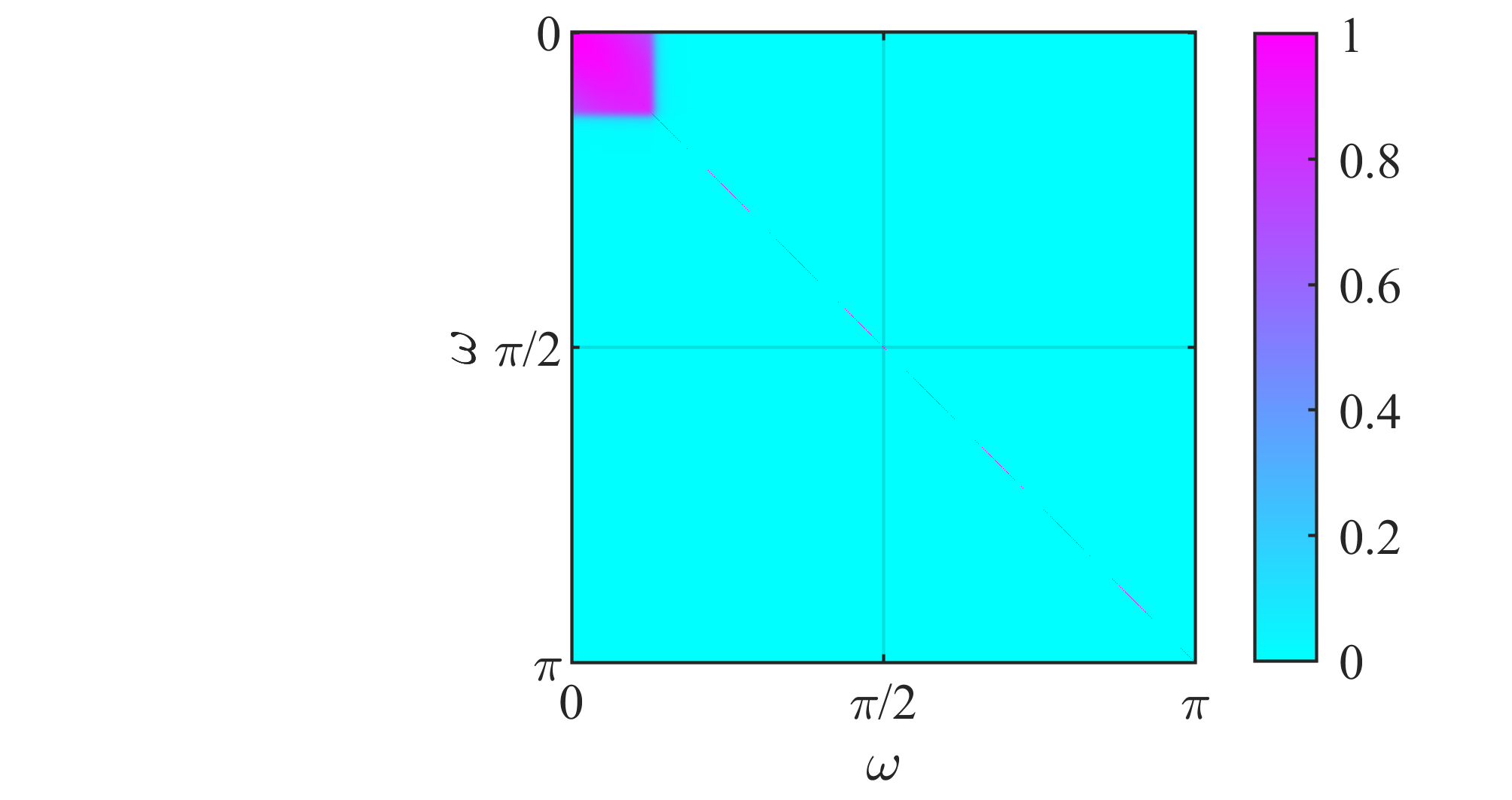}
        \caption{Differencing and $\Delta=1$.}
    \end{subfigure}%
    \vskip\baselineskip
    \begin{subfigure}[t]{0.475\textwidth}
        \centering
        \includegraphics{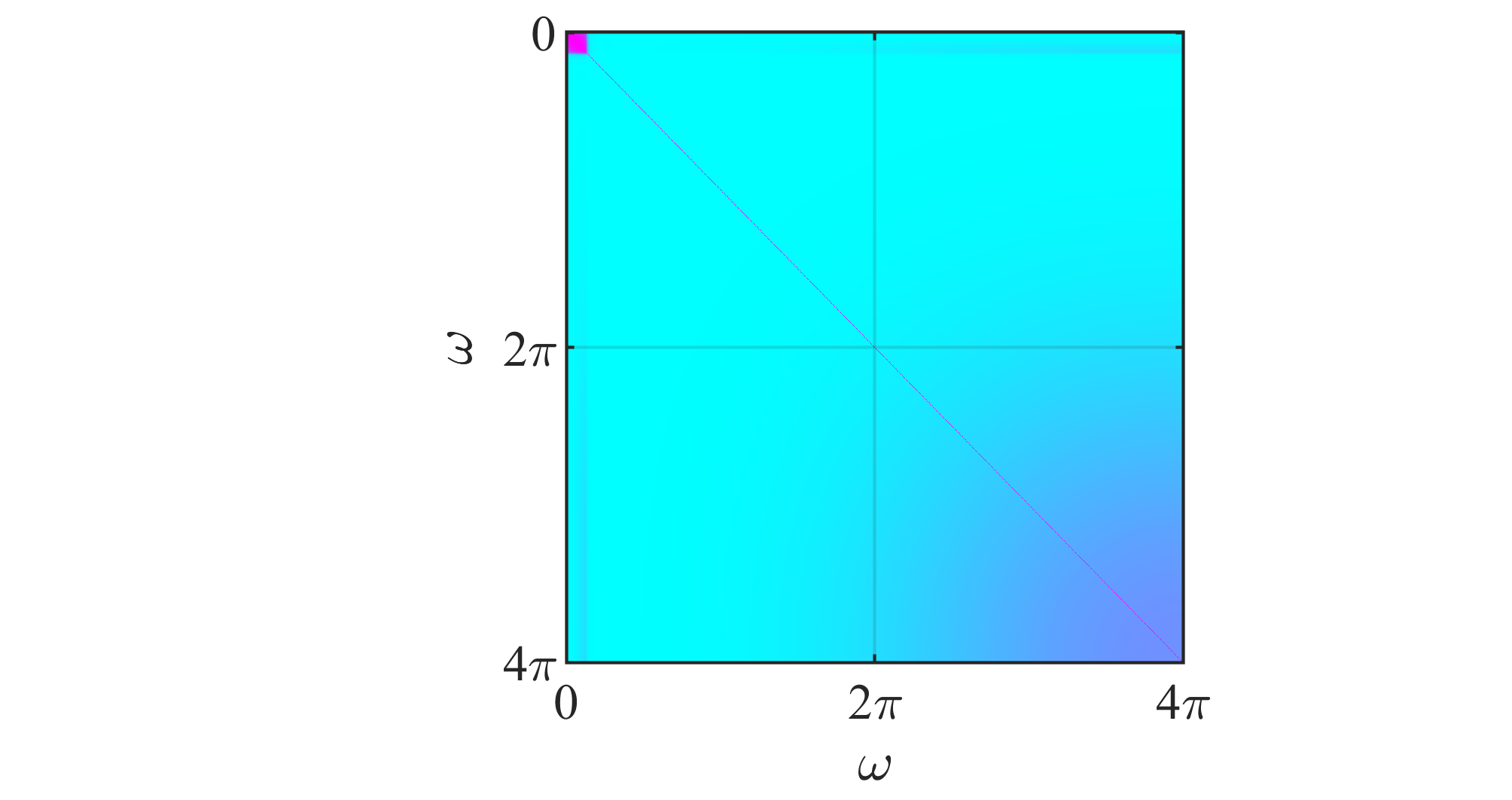}
        \caption{No differencing and $\Delta=1/4$.}
    \end{subfigure}
    \hfill
    \begin{subfigure}[t]{0.475\textwidth}
        \centering
        \includegraphics{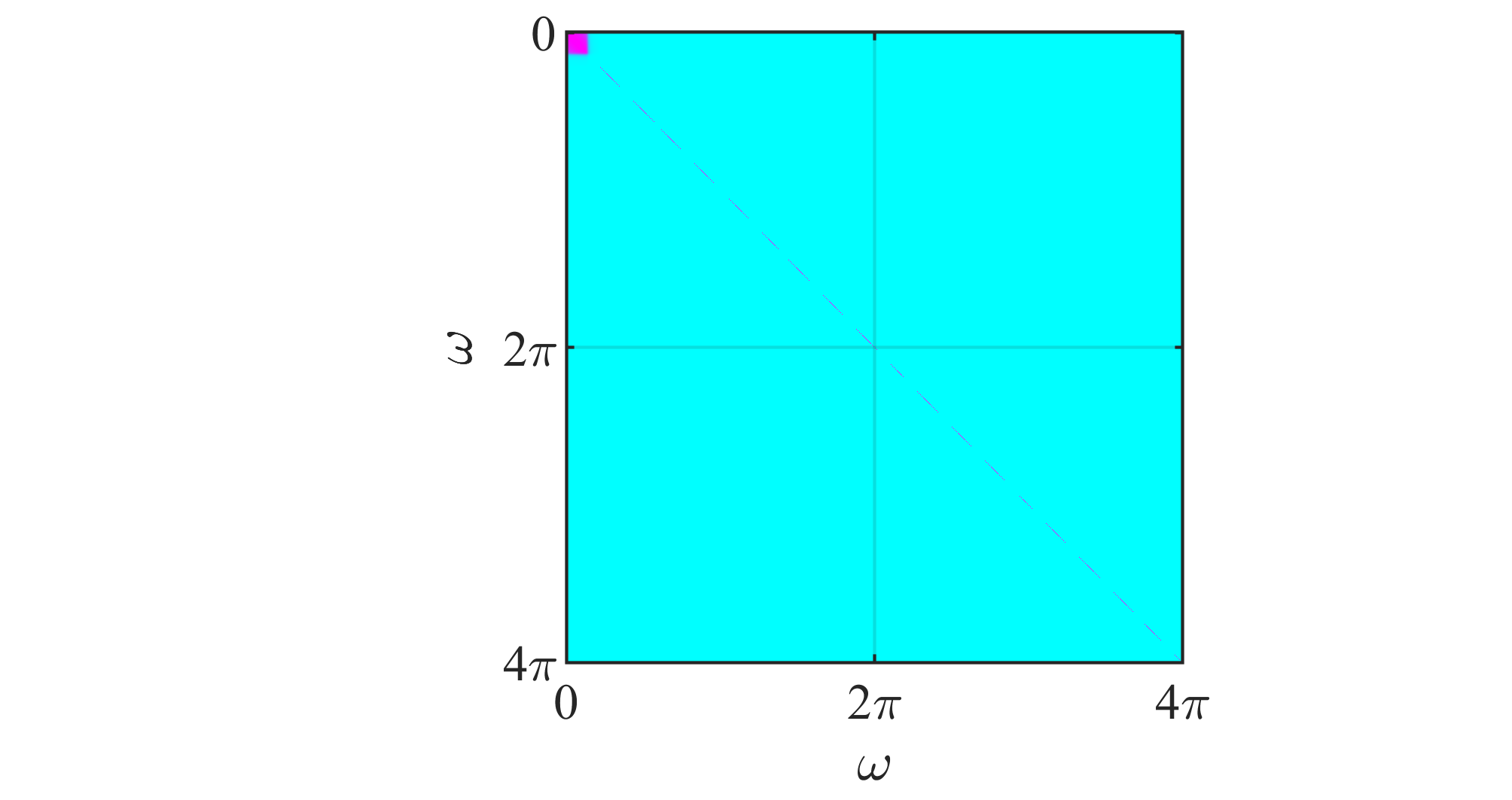}
        \caption{Differencing and $\Delta=1/4$.}
    \end{subfigure}%
    \caption{Image plots of the correlation matrix of the periodogram of a wind-generated wave process for different values of $\Delta$ and the corresponding images for the differenced process, for a generalised JONSWAP with parameters $\alpha=0.7$, $\omega_p=0.7$, $\gamma=3.3$ and $r=5$.}
    \label{fig:correlation}
\end{figure}
    
    %% ====================== end of fitting section ====================== %%
\section{Practical concerns for implementation with the generalised JONSWAP} \label{sec:practical}
    In Section  \ref{sec:fitting}, we described some techniques that can be used to estimate model parameters. When implementing these techniques for ocean wave models, there are some practical concerns that must be addressed. Firstly, we need not use all of the Fourier frequencies when fitting the model. Indeed, it may be preferable to remove some frequencies that are contaminated by some other process or by observational noise.
    Secondly, there is no known analytical form for the autocovariance corresponding to many of the spectral density functions used when modelling ocean waves. Therefore, numerical techniques for estimating the autocovariance play an important role in many of the fitting procedures discussed in Section  \ref{sec:fitting}.
    In particular, it is necessary for both the de-biased Whittle likelihood and for full maximum likelihood.
    \subsection{Frequency selection}\label{sec:freq_selection}
    Many of the estimators defined in Section  \ref{sec:fitting} involve minimising or maximising objective functions, which are expressed as the sum over some set of frequencies $\Omega\subseteq\Omega_{N,\Delta}$.
    The most simple choice for this set $\Omega$ is just the set of Fourier frequencies $\Omega_{N,\Delta}$. At first glance, this would seem like the most sensible choice (as omitting frequencies is essentially the same as throwing away data-points).
    However, there are many different circumstances in which it is preferable to remove some of the frequencies from the fit.
    \par
    One practical reason for removing certain frequencies is that for very low frequencies, the generalised JONSWAP spectra is zero to machine precision. This often introduces numerical instabilities, especially for objective functions that involve dividing by the spectral density function (such as the Whittle likelihood).
    As can be seen in Figure~\ref{fig:correlation}, there is also a region of high correlation in the low frequencies, which provides an additional motivation for removing such frequencies. An alternative method to reducing correlations in the periodogram is to use tapered versions of the spectral density estimate in the Whittle likelihood \citep{Dahlhaus1988}, but in simulations (available on GitHub) we found omitting frequencies from the fit to be a better solution than tapering in terms of the resulting bias and variance of parameter estimates.
    Another reason for removing certain frequencies from the fit is that it can help to remove noise processes that are present in a record. For example, wave records often contain a low-frequency swell component, but we are interested in the parameters of the wind sea component. By removing frequencies in which the swell is dominant, we are better able to model the wind sea component of a sea state.
    On top of this, there is an added technical concern when using the Whittle and de-biased Whittle likelihoods. The zero and Nyquist frequencies must be omitted (or a modified version of the summand must used for those frequencies). This is because these methods are based on the asymptotic distribution of the periodogram, which is different at the 
    Nyquist and zero frequency than it is at other frequencies.
    \par
    Fitting the model in this way can be thought of as fitting a semi-parametric model (as some of the frequencies are being modelled using a parametric model, and the remaining frequencies by some non-parametric model such as the periodogram). It is worth noting that this approach can actually be applied to full maximum likelihood as well. This can be achieved by transforming both the observations and autocovariance of the model into the frequency domain, applying a band pass filter, and then transforming back. While this is possible in theory, it is fiddly in practice and is no longer exact. This demonstrates another major advantage of spectral-likelihoods: it is far easier to filter out undesired frequencies from the model fit.
    However, the choice of frequencies to be used in the fit should be made prior to the objective function being optimised. 
    Otherwise the number of degrees of freedom could be changing throughout the optimisation routine, which would likely result in additional bias.
    \subsection{Numerical estimation of the autocovariance}
    To calculate both the multivariate Gaussian likelihood and de-biased Whittle likelihood we require the autocovariance of the process given a certain parameter choice (in \eqref{eq:ML} for the multivariate Gaussian likelihood and \eqref{eq:EI} for the de-biased Whittle likelihood). For the generalised JONSWAP spectra, there is no analytical form for this autocovariance. As such, the autocovariance must be approximated numerically.
    Firstly, recall that the autocovariance is the Fourier transform of the spectral density function, as defined in~\eqref{eq:acf:continuous}, and we wish to obtain the autocovariance at lags $0,\Delta,\ldots,(N-1)\Delta$. The first problem we encounter is that this integral is over the entire real line. Clearly, we cannot approximate such an integral numerically and must instead settle for integrating up to some finite frequency, such that the spectral density function beyond that frequency is sufficiently small. In particular, it is convenient to choose a multiple of the Nyquist frequency, as the integral will be approximated using a Fast Fourier Transform, so the desired lags can be extracted by sub-sampling if a multiple of the Nyquist is used in the integration. Therefore, based on equation \eqref{eq:acf:continuous}, we can construct the approximate autocovariance
    \begin{align}
        \hat\acf(\tau)&=\int_{-L\pi/\Delta}^{L\pi/\Delta} f(\omega)\exp\{i\tau\omega\} \de \omega,\label{eq:hat_acf}
    \end{align}
    for $L\in\NN=\{1,2,3,\ldots\}$.
    \par
    Alternatively, we could consider the relation given in equation \eqref{eq:acf:discrete}, between the autocovariance and the discrete time spectral density function. In this case, we would first need to approximate the spectral density function for the discrete time process. To do this, we use a truncated version of the relation given by equation \eqref{eq:sdf:aliasing}, between the spectral density of the continuous and discrete time processes. Consider the approximation of the aliased spectral density function given by
    \begin{align}
        \tilde f_\Delta(\omega)&=\sum_{k=-K}^{K} f\left(\omega+\frac{2\pi k}{\Delta}\right),
    \end{align}
    for $K\in\NN_0=\{0,1,2,\ldots\}$. Then we can construct the alternative approximation to the autocovariance given by
    \begin{align}
        \tilde\acf(\tau)&=\int_{-\pi/\Delta}^{\pi/\Delta} \tilde f_\Delta(\omega)\exp\{i\tau\omega\}\de\omega.\label{eq:tilde_acf}
    \end{align}
    Notice that we may write
    \begin{align}
        \tilde\acf(\tau)&=\int_{-\pi/\Delta}^{\pi/\Delta} \sum_{k=-K}^K f\left(\omega+\frac{2\pi k}{\Delta}\right)\exp\{i\tau\omega\}\de\omega\\
        &=\sum_{k=-K}^K\int_{-\pi/\Delta}^{\pi/\Delta} f\left(\omega+\frac{2\pi k}{\Delta}\right)\exp\{i\tau\omega\}\de\omega\\
        &=\sum_{k=-K}^K\int_{(2k-1)\pi/\Delta}^{(2k+1)\pi/\Delta} f\left(\omega\right)\exp\{i\tau\omega\}\de\omega\\
        &=\int_{(-2K-1)\pi/\Delta}^{(2K+1)\pi/\Delta} f\left(\omega\right)\exp\{i\tau\omega\}\de\omega.\label{eq:equiv_acf}
    \end{align}
    From \eqref{eq:equiv_acf} we can see that, if $L=2K+1$, then $\hat\acf(\tau)$ and $\tilde\acf(\tau)$ are equivalent. 
    In practice, these integrals must be approximated numerically. 
    To do this, we consider a Riemann approximation with bins of width $2\pi/M\Delta$. 
    By choosing $M$ to be some integer bigger than $2N$, we can obtain the desired lags by performing a Fast Fourier Transform and then sub-sampling appropriately. 
    \par
    We can now see that the approximation based on $\hat\acf(\tau)$ (in \eqref{eq:hat_acf}) can be computed in $O(LM\log LM)$ time. However, computing the second approximation, based on $\tilde\acf(\tau)$ (in \eqref{eq:tilde_acf}), requires first computing $\tilde f(\omega)$ at $M$ frequencies (taking $O(LM)$ operations) and then performing a Fourier transform on $M$ frequencies, requiring $O(M\log M)$ operations.
    In other words, the first approach requires $O(M(L\log L+L\log M))$ operations, whereas the second only requires $O(M(L+\log M))$ operations.
    For this reason, we use the latter approach when approximating the autocovariance: first approximating the aliased spectral density, then approximating the autocovariance. 
    \par
    The choice of $K$ (or equivalently $L$) depends on the tail behaviour of the spectral density function in question. In practice, we choose $K$ so that for frequencies beyond $(2K+1)\pi/\Delta$, the spectral density is below some threshold (typically $10^{-6}~ \text{m}^2~\text{s}~\text{rad}^{-1}$), though $K$ should really be chosen so that it scales with $N$, for convergence results to still apply.
    The choice of $M$ is based on the required accuracy of the integral approximation and should be tuned accordingly.
    For the generalised JONSWAP, we have found that $M=\max\{8192, 2N\}$ is a good choice.
%% ====================== end of numerical implementation section ===== %%
\section{Simulation Study} \label{sec:sim}
Though it is possible to make theoretical statements about the asymptotic behaviour of different estimators, from a practical perspective, their finite sample behaviour is of primary interest.
To investigate this, we perform a simulation study to assess the performance of the estimators described in Section  \ref{sec:fitting}.
In this simulation study, we compare six different fitting techniques based on these estimators. The first, which we call least squares, uses the curve fitting approach with the periodogram. The second approach is similar, but uses Bartlett's method to estimate the spectral density function, which we refer to as Bartlett least squares. The window size is chosen so that we have a spectral resolution of $0.2\pi$, i.e. the window size is $100/\Delta$.\footnote{Clearly for some values of $\Delta$ this would not be an integer; however, for the values of $\Delta$ that we choose it is.} For 1.28Hz data, this corresponds to a window size of 128. The third and fourth approaches are the Whittle and aliased Whittle likelihoods respectively. The final two approaches are the de-biased Whittle likelihood and full time domain maximum likelihood.
\subsection{Method}
To investigate the effectiveness of different fitting approaches we simulate a linear wave record with a known parametric spectral density function and then re-estimate the parameters from the simulated record. By repeatedly performing this process, we can assess the bias and variance of each of the estimators discussed in Section  \ref{sec:fitting}.
For the purposes of the simulation study we let $\XN$ be a random variable with a multivariate normal distribution resulting from sub-sampling the continuous-time mean-zero stationary Gaussian process $\bX$, where $\bX$ has spectral density function $f_G(\omega\mid\theta)$, defined by \eqref{eq:genJONSWAP}. We then simulate a realisation of $\XN$ using the circulant embedding method described by \cite{Davies1987} (and for complex valued processes by \cite{Percival2006}).
We choose to use circulant embedding over the typical approaches for simulating Gaussian processes often used in the ocean waves literature, such as the method due to \cite{Tucker1984}, as these methods only approximately simulate a Gaussian process with the given spectral density function, whereas circulant embedding is exact (up to the quality of the approximation of the autocovariance that is used). Furthermore, many techniques, such as the method proposed by \cite{Tucker1984}, or the more recent modification due to \cite{Merigaud2018} do not account for aliasing when simulating the process. 
Since we are explicitly interested in the effect that aliasing has on recovered parameters, it is important that we simulate something that is as close as we can get to a Gaussian process with the desired aliased spectral density function.
Circulant embedding generates time series with all of the sampling effects discussed by \cite{Tucker1984}, but also includes additional finite sampling effects such as aliasing and correlations between spectral estimates at different frequencies. Such effects should be present in generated time series, but are not in time series generated using the method suggested by \cite{Tucker1984}. More details can be found in \cite{Davies1987, Dietrich1997, Wood1994}.
\par
To perform the fitting we first choose one of the objective functions described in Section~\ref{sec:fitting} and optimise this using the fmincon function in \matlab (with maximisation done by minimising the negative of the objective function). An initial guess for the fitting procedure needs to be provided for each of the parameters. For $\omega_p$, we use the frequency corresponding to the largest value of the periodogram. For $r$, we use a basic linear regression coefficient between the log spectral density and log periodogram over the tail frequencies (where the tail is chosen to be all frequencies that are closer to the Nyquist than the peak).
We choose to initialise $\gamma$ by setting it equal to 3. This is because choosing $\gamma$ heuristically is not easy, and $\gamma=3$ is close to the value commonly assumed by many oceanographers. In practice, the initial choice of $\gamma$ does not seem to have a huge impact on the final fitted values; however, the optimisation could also be run with multiple starting values of $\gamma$ and the best estimate could then be selected. Once these parameters are initialised, $\alpha$ is initialised so that the area under the initial parametric spectral density function matches the area under the periodogram.
In simulations, we find that the inference is not sensitive to the initial guess (provided it is sensible).
In practice we are often fitting models to multiple consecutive sea states. In this case, it can be more efficient to use the parameter estimates for the previous sea state as initial values when optimising.

\subsection{A canonical sea state}\label{sec:single_sea_state}
We shall begin by considering how each of the estimators perform for one choice of true parameters, before showing that the results are robust to the true parameters. In particular, we begin by considering a spectral density function of the form described in Section  \ref{sec:ocean_model}, with $\sigma_1=0.07$, $\sigma_2=0.09$ and $s=4$ treated as known, and with $\alpha=0.7$, $\omega_p=0.7$, $\gamma=3.3$ and $r=4$ treated as unknown parameters to be estimated.
The reason for choosing these parameters is that $\alpha=0.7$ roughly corresponds to the scaling present when using Phillip's constant in a JONSWAP spectra, $\omega_p=0.7$ is a reasonable choice for peak frequency, $\gamma=3.3$ is commonly assumed to be the peak enhancement factor, and $r=4$ is one of the suggested values for the tail decay index.
Half hour records sampled at 1.28Hz (a standard time interval and sampling frequency for wave records) were simulated and the parameters were estimated using each of the six estimation methods described above. The resulting estimates across 1000 repeated simulations are summarised in Figure \ref{fig:half_hour}, alongside the time taken to perform the optimisation. For comparison, the true value of each parameter is given by a horizontal black dashed line. 
\par
Perhaps the most striking feature of Figure \ref{fig:half_hour} is the difference in the variability in estimates of the tail parameter, $r$, when comparing least squares type techniques to likelihood based techniques.
Least squares techniques recover parameter estimates ranging from well beyond three to five, making it very difficult to make any statements about the true value of the tail decay.
However, we can see that statistical techniques such as the de-biased Whittle likelihood are able to recover the original tail parameter to within a few decimal places.
Therefore, by using the de-biased Whittle likelihood, practitioners would be able to distinguish between $\omega^{-4}$ and $\omega^{-5}$ spectral tails in observed records. 
Though it should be noted that this assumes the wind-sea had a spectral density that is well described by a generalised JONSWAP and so we cannot provide model free estimates of the tail decay. In particular, a different model would be required for a transition in tail decays over frequency, such as the effect discussed by \cite{Babanin2010}; however, the de-biased Whittle likelihood is generic and could be applied to other models provided they satisfied certain conditions (see Appendix~\ref{append:assumption}).
We can also see that the de-biased Whittle likelihood offers an improvement in estimates of $\gamma$, performing almost as well as full maximum likelihood.
\par
It is also interesting that bias can be seen in both the Whittle and aliased Whittle likelihood estimates, but that this bias is not present in the de-biased Whittle likelihood estimates.
This verifies that the de-biased Whittle likelihood is indeed accounting for some of the bias present in standard Whittle likelihood, and demonstrates why de-biased Whittle likelihood is necessary over the aliased Whittle likelihood, which can still be seen to be biased for some parameters.
In the analysis of Figure \ref{fig:half_hour} full maximum likelihood provides the best estimates, in terms of root mean square error.
However, this comes at significant computational cost, whilst giving limited improvement in bias and variance when compared to the de-biased Whittle likelihood.
\par
\begin{figure}[htp]
    \centering
    \includegraphics{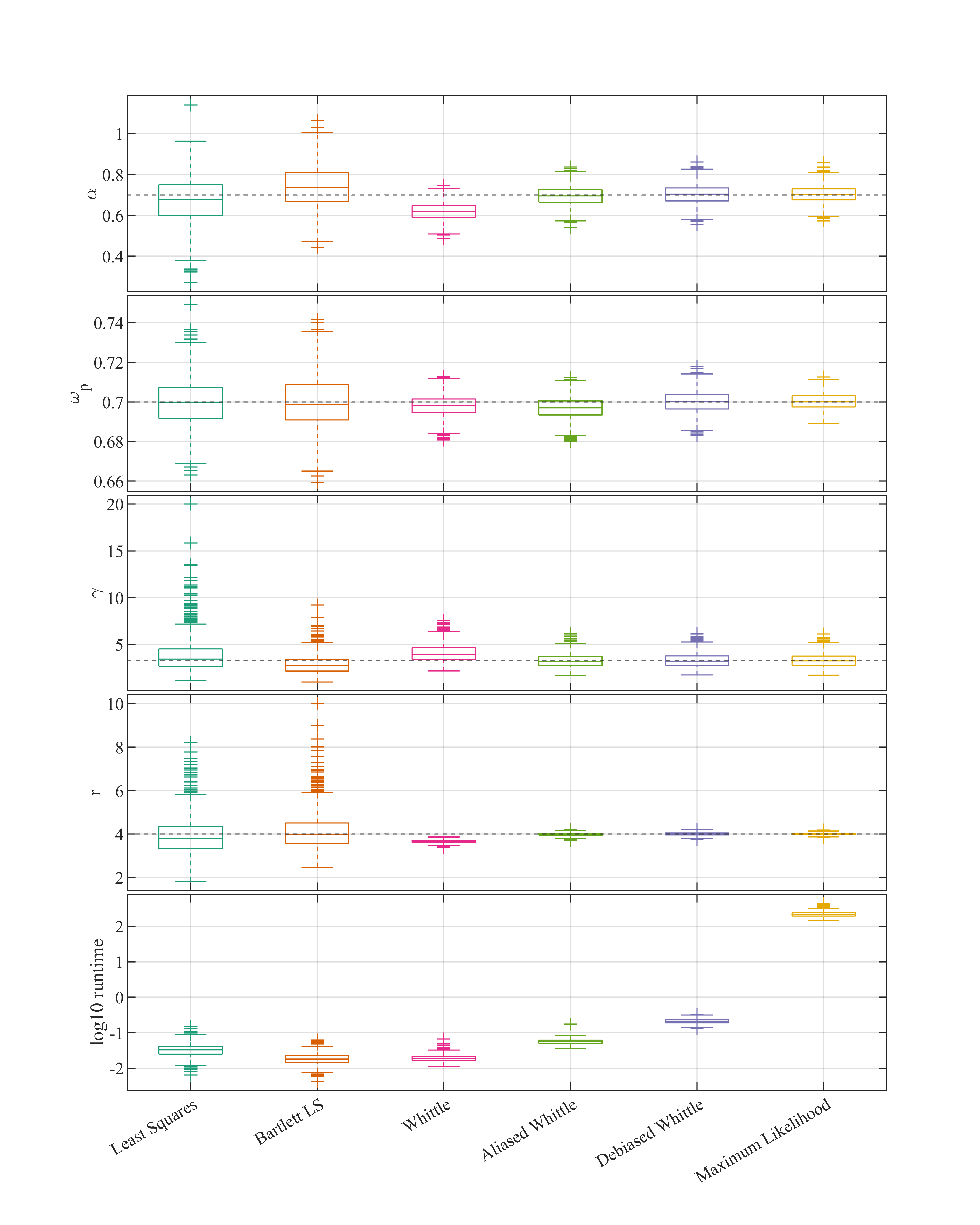}
    \caption{Boxplots of parameter estimates and time taken for six different fitting routines with true parameters denoted by the dashed black lines. Each row displays the results for a given fitting technique, as well as the log of the time taken to perform the optimisation, recorded in seconds. The fitting was performed on simulations of 1.28Hz observations recorded for half an hour (2304 observations). The process was repeated 1000 times.}
    \label{fig:half_hour}
\end{figure}
\par
Often, during optimisation, parameters may trade off against one-another. Therefore it is also important to look at the joint behaviour of parameter estimates.
Figure \ref{fig:half_hour_scatter} shows a scatter plot of the de-biased Whittle likelihood estimates from Figure \ref{fig:half_hour}. We can see that there is very strong correlation between the estimates of $\alpha$ and $r$, and some negative correlation between $\alpha$ and $\gamma$. This likely occurs because $\gamma$ and $r$ change the area under the spectral density function, so $\alpha$ is likely to be adjusted to compensate. Though it would be possible to reparameterise to try and avoid this, it does not seem to have a significant impact on the resulting estimates and is therefore not necessary.
\par
In practice, longer sea states are often used to estimate model parameters. Therefore, we also compare some of the methods for 3 hour records. Figure \ref{fig:three_hour} shows the comparison of least squares, Bartlett least squares and de-biased Whittle likelihood estimates for these 3 hour records. The variance in the first two estimators has indeed decreased when compared to the estimates from half hour records shown in Figure \ref{fig:half_hour}. 
However, by comparing the de-biased Whittle likelihood estimates in Figure \ref{fig:half_hour} to the least squares estimates in Figure \ref{fig:three_hour}, we can see that the de-biased Whittle likelihood used on a half hour record yields better estimates than the least squares based estimates performed on 3 hour records.
The longer record reduces the variance of the least squares and Bartlett least squares techniques enough to allow us to see another interesting feature, namely that there is significant bias present in the Bartlett least squares estimates that is not present in the standard least squares estimates. This demonstrates that non-parametric smoothing can have unexpected consequences when used to fit a parametric spectral density function.
\par
In essence, by using the de-biased Whittle likelihood, we can obtain more accurate and precise parameter estimates, whilst simultaneously reducing the length of the record required to obtain the estimates.
This improvement is especially noticeable (and important) for the peak enhancement factor, $\gamma$, and tail decay index, $r$.
In practice, this has two important consequences. Firstly, we can reduce the amount of time for which the surface is assumed to be stationary. This means that we can fit stationary models to weather systems that evolve very quickly, such as tropical cyclones. 
Secondly, we can obtain parameter estimates at more frequent time intervals. This higher parameter resolution means that we gain a more detailed insight into how certain parameters evolve throughout the course of a meteorological event.
\begin{figure}[htp!]
    \centering
    \includegraphics{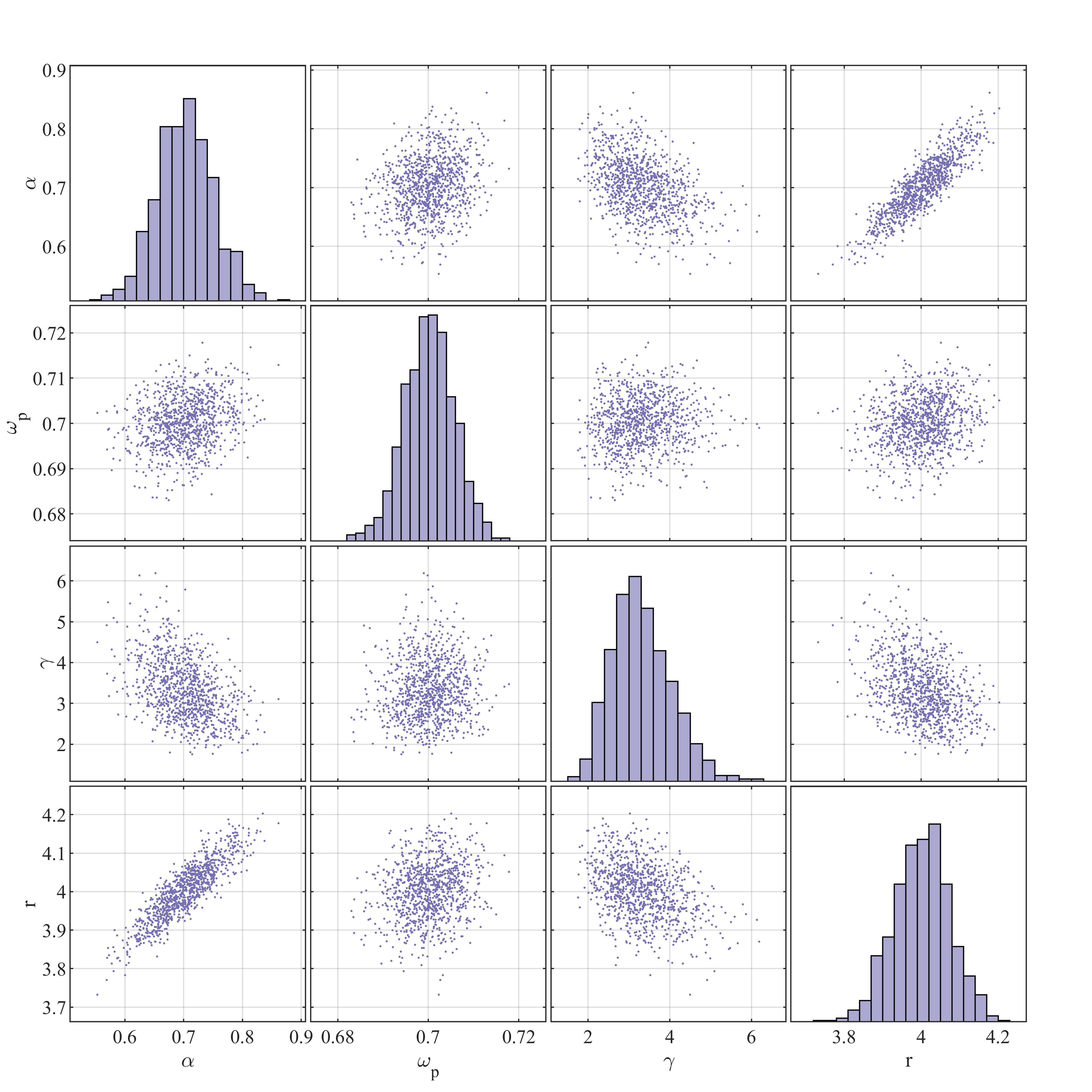}
    \caption{Scatter plot of the de-biased Whittle estimates obtained from half hour records shown in Figure \ref{fig:half_hour}.}
    \label{fig:half_hour_scatter}
\end{figure}
\begin{figure}[htp!]
    \centering
    \includegraphics{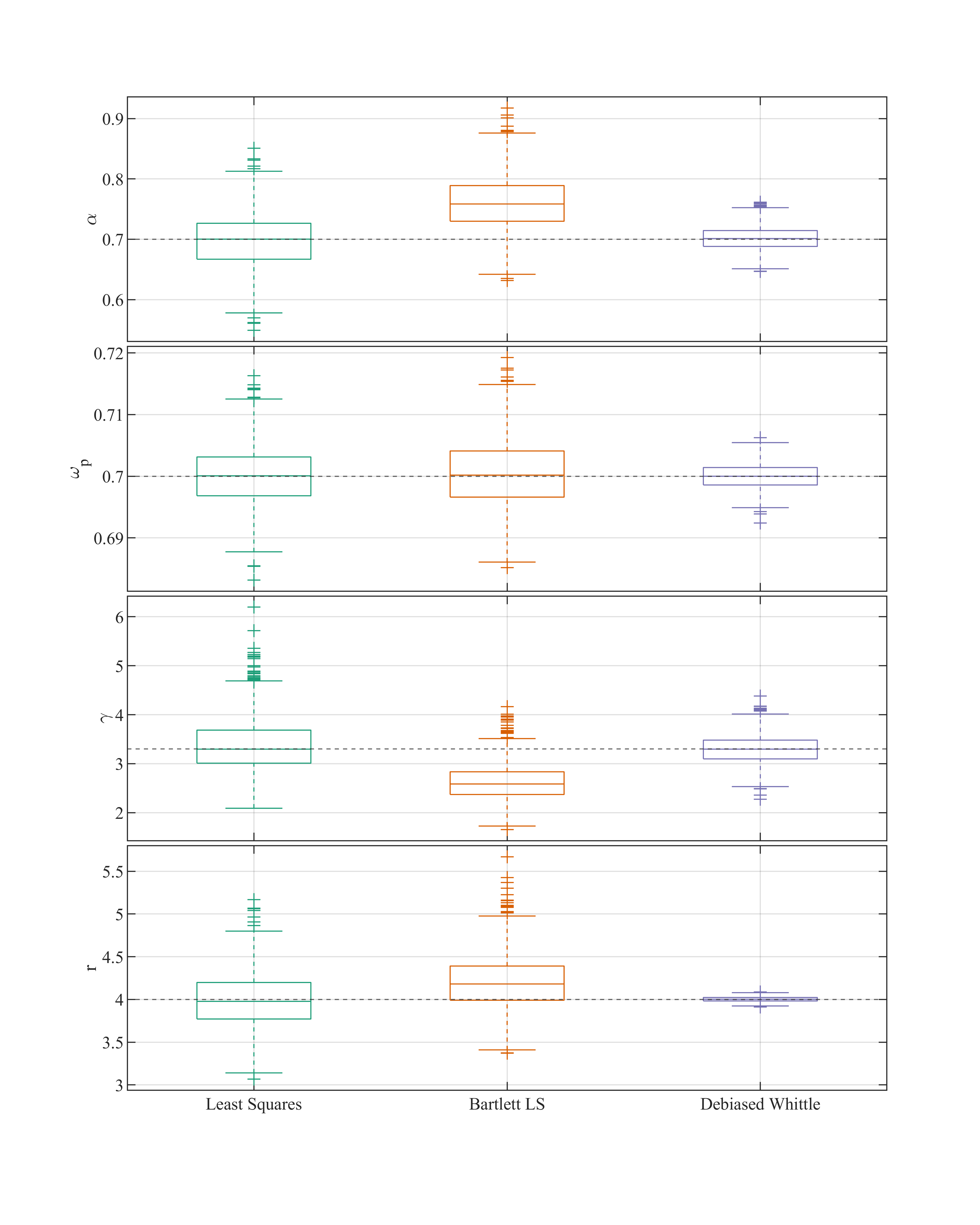}
    \caption{Boxplots of 1000 parameter estimates using three different fitting routines on the same simulations using the same true parameters as Figure \ref{fig:half_hour} and same sampling interval but simulating 3 hour records.}
    \label{fig:three_hour}
\end{figure}
\subsection{Robustness of results}\label{sec:comparison:robust}
In Section  \ref{sec:single_sea_state} we have demonstrated that the de-biased Whittle likelihood can produce parameter estimates that are both more accurate and more precise than those produced by least squares techniques, without making huge sacrifices in terms of computational time. However, it is also important to check that these results extend to different choices of the true parameter.

In Table~\ref{tab:rmse}, we present the results of a simulation study comparing least squares, Bartlett least squares and the de-biased Whittle likelihood over 24 different choices of true parameter. For each choice of true parameter, we calculated the percentage bias, standard deviation (SD) and root mean square error (RMSE) of the estimates (relative to the true parameter). We then averaged over all the choices of true parameter, yielding an average percent bias, SD and RMSE for each parameter.
The parameter choices used to perform these simulations were all the combinations of $\alpha=0.7$; $\omega_p=0.7, 0.9, 1.2$; $\gamma=1,2,3.3,5$; and $r = 4,5$.
Boxplots for each set of parameter choices are available on GitHub \citep{Grainger2021a}.
We can see from Table~\ref{tab:rmse} that there are substantial reductions in both the bias and standard deviation of all the estimated parameters.

\begin{table}[h!]

\begin{tabular}{c|rrr|rrr|rrr}
   & \multicolumn{3}{c}{Bias}  & \multicolumn{3}{c}{SD} & \multicolumn{3}{c}{RMSE}   \\
   & \multicolumn{1}{c}{LS}    & \multicolumn{1}{c}{BLS}   & \multicolumn{1}{c}{DW}   & \multicolumn{1}{c}{LS}    & \multicolumn{1}{c}{BLS}   & \multicolumn{1}{c}{DW} & \multicolumn{1}{c}{LS}    & \multicolumn{1}{c}{BLS}   & \multicolumn{1}{c}{DW} \\ \hline
$\alpha$   & 4.70\%  & 10.88\% & 0.80\% & 34.64\% & 30.69\% & 9.13\%  & 35.46\% & 33.49\% & 9.19\%  \\
$\omega_p$ & 0.08\%  & 0.11\%  & 0.06\% & 1.73\%  & 1.85\%  & 0.77\%  & 1.74\%  & 1.86\%  & 0.77\%  \\
$\gamma$   & 15.88\% & 10.20\% & 2.98\% & 43.29\% & 29.27\% & 18.00\% & 46.39\% & 31.75\% & 18.71\% \\
$r$        & 3.89\%  & 5.60\%  & 0.18\% & 17.57\% & 15.76\% & 2.09\%  & 18.35\% & 17.44\% & 2.11\%  \\\hline
average    & 6.14\%  & 6.70\%  & 1.01\% & 24.31\% & 19.39\% & 7.50\%  & 25.49\% & 21.13\% & 7.69\% 
\end{tabular}

\caption{Average percentage bias, standard deviation and root mean squared error across 24 different parameter choices calculated from 1000 repetitions per parameter choice for least squares (LS), Bartlett least squares (BLS) and de-biased Whittle likelihood (DW). Parameter choices were all the combinations of $\alpha=0.7$; $\omega_p=0.7, 0.9, 1.2$; $\gamma=1,2,3.3,5$ and $r = 4,5$. Simulated records were half an hour long and sampled at 1.28Hz. The bottom row shows the average of these quantities over all the parameters in the model.}
\label{tab:rmse}

\end{table}
\par
True parameters on the boundary of the parameter space can cause problems when performing parametric estimation. As such, we shall further discuss the results for the special case when $\gamma=1$. 
The fitted parameters for half hour simulated records with true parameter values $\alpha=0.7$, $\omega_p=0.7$, $\gamma=1$ and $r=5$, using the least squares, Bartlett least squares and de-biased Whittle likelihood techniques can be seen in Figure \ref{fig:half_hour_gamma_1}. 
This is an interesting case because not only does the true parameter lie on the boundary of the parameter space, but this value of $\gamma$ corresponds to a Pierson-Moskowitz spectrum, for a fully developed sea. This means that such a value of $\gamma$ could occur in nature, and as such it is important that we can model this case.
The problem is that the theoretical guarantees for an approach such as the de-biased Whittle likelihood rely on the assumption that the true parameter does not lie on the boundary of the parameter space. Since this is not the case for $\gamma=1$, therefore we must take care when dealing with records for which the true value of $\gamma$ may indeed be 1. In particular, we can see from Figure \ref{fig:half_hour_gamma_1} that the de-biased Whittle likelihood estimates for $\gamma$ is not normally distributed, a result that is assumed when constructing the confidence intervals that are described later in Section  \ref{sec:confidence}.
One way to deal with this problem is to fit both a model with $\gamma=1$ fixed and one with $\gamma$ as a free parameter, and then test to see if there is evidence for $\gamma>1$. Such a procedure allows us to circumvent potential issues caused by $\gamma$ lying on the boundary of the parameter space.
This could be performed by using the procedure developed by \cite{Sykulski2017}, adapted to the 1D case.
\begin{figure}[htp!]
    \centering
    \includegraphics{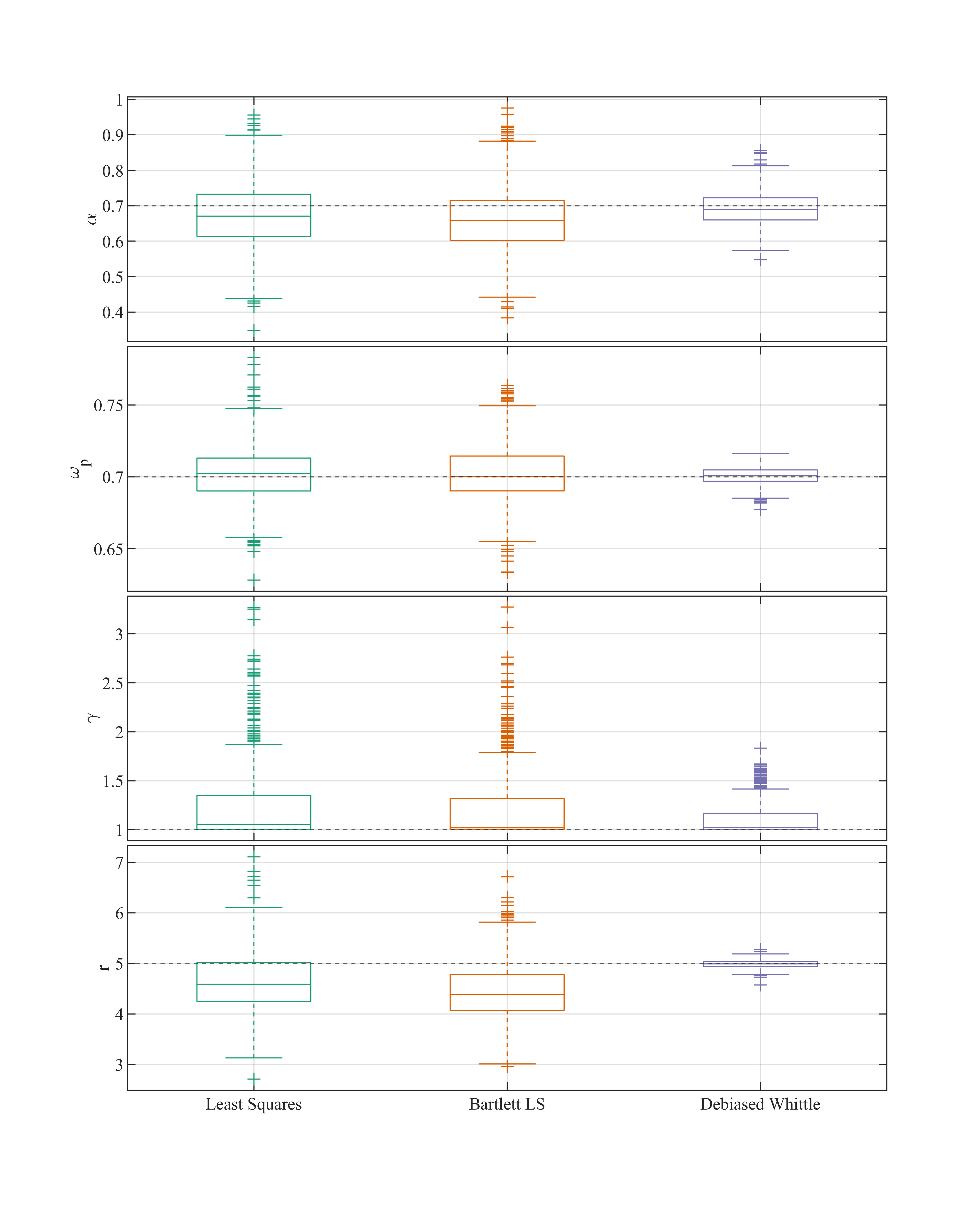}
    \caption{Boxplots of 1000 parameter estimates using three different fitting routines with true parameters $\alpha=0.7$, $\omega_p=0.7$, $\gamma=1$ and $r=5$, estimated from simulated half hour records sampled at 1.28Hz.}
    \label{fig:half_hour_gamma_1}
\end{figure}

\subsection{Differencing for high sampling frequencies}\label{sec:differencing}
As we discussed in Section \ref{sec:fitting:differencing}, for wind generated waves observed at a 4Hz sampling rate, the periodogram is highly correlated. 
As noted in Section \ref{sec:fitting:comparison}, in this case, we would expect spectral techniques to perform poorly compared to full maximum likelihood. 
Differencing can reduce the correlation in the periodogram, and therefore can be a powerful tool to remove bias from spectral methods.
Figure \ref{fig:diffBoxplot} shows box plots of estimated parameters for least squares, Bartlett least squares and de-biased Whittle likelihood techniques, both with and without differencing, for both 1Hz and 4Hz data (in this case, 2048 seconds worth of data was simulated per record, i.e. 2048 observations for the 1Hz simulation and 8192 for the 4Hz simulation). We can see little benefit from differencing in the 1Hz estimates (Figure \ref{fig:diffBoxplot:1}). However, when it comes to 4Hz data, there is a major benefit to differencing, especially for the de-biased Whittle likelihood.
Because there is little difference between the de-biased Whittle fits on the original and differenced 1Hz data, we would recommend that differencing is used as standard, to protect against the issues seen in Figure \ref{fig:diffBoxplot:4}. At the very least, investigating the correlation matrix of the periodogram should be an important diagnostic when fitting spectral models.
\begin{figure}[htp!]
    \centering
    \begin{subfigure}[t]{0.475\textwidth}
        \centering
        \includegraphics{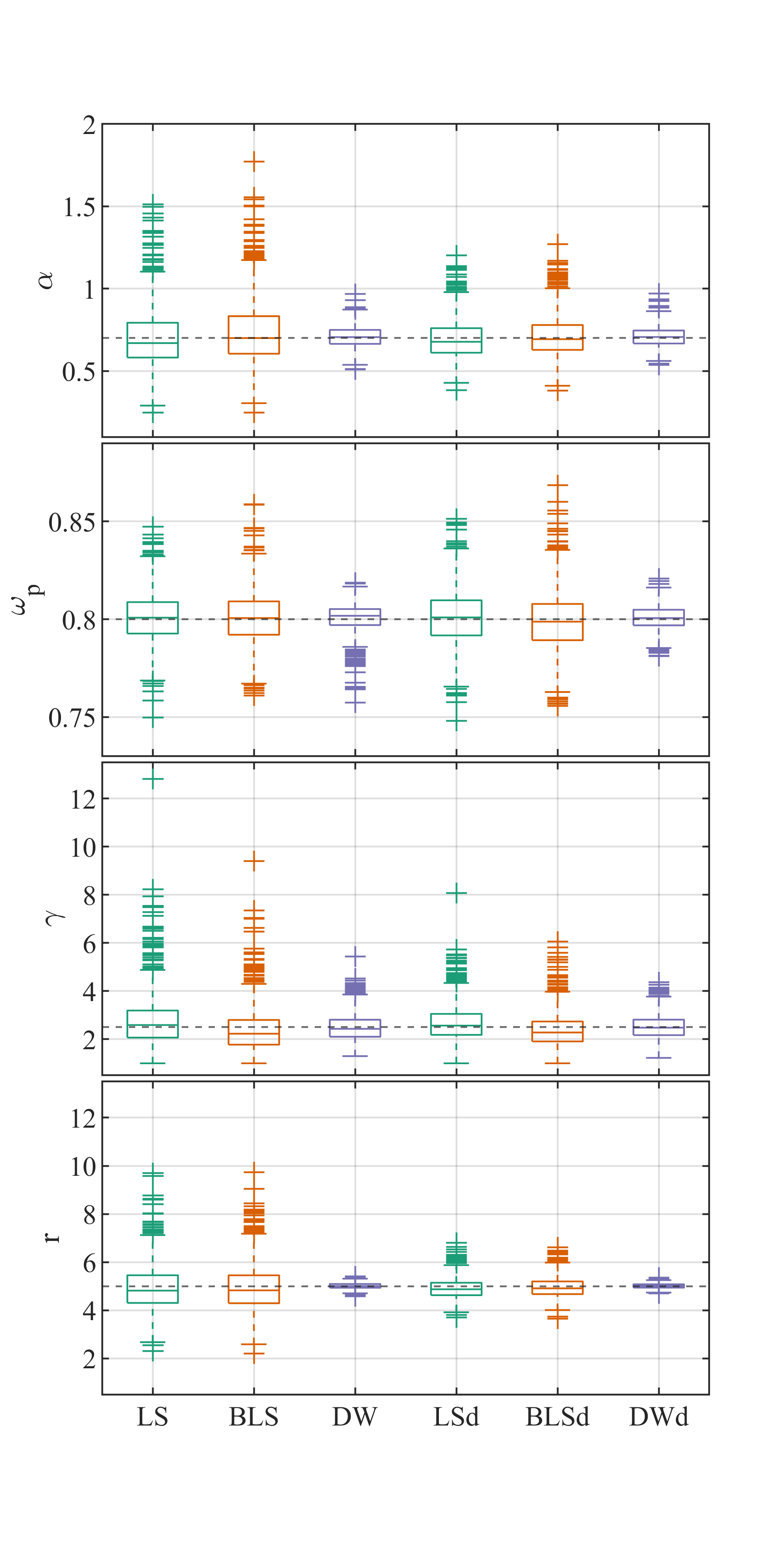}
        \caption{Parameter estimates for 1Hz data.}
        \label{fig:diffBoxplot:1}
    \end{subfigure}
    \hfill
    \begin{subfigure}[t]{0.475\textwidth}
        \centering
        \includegraphics{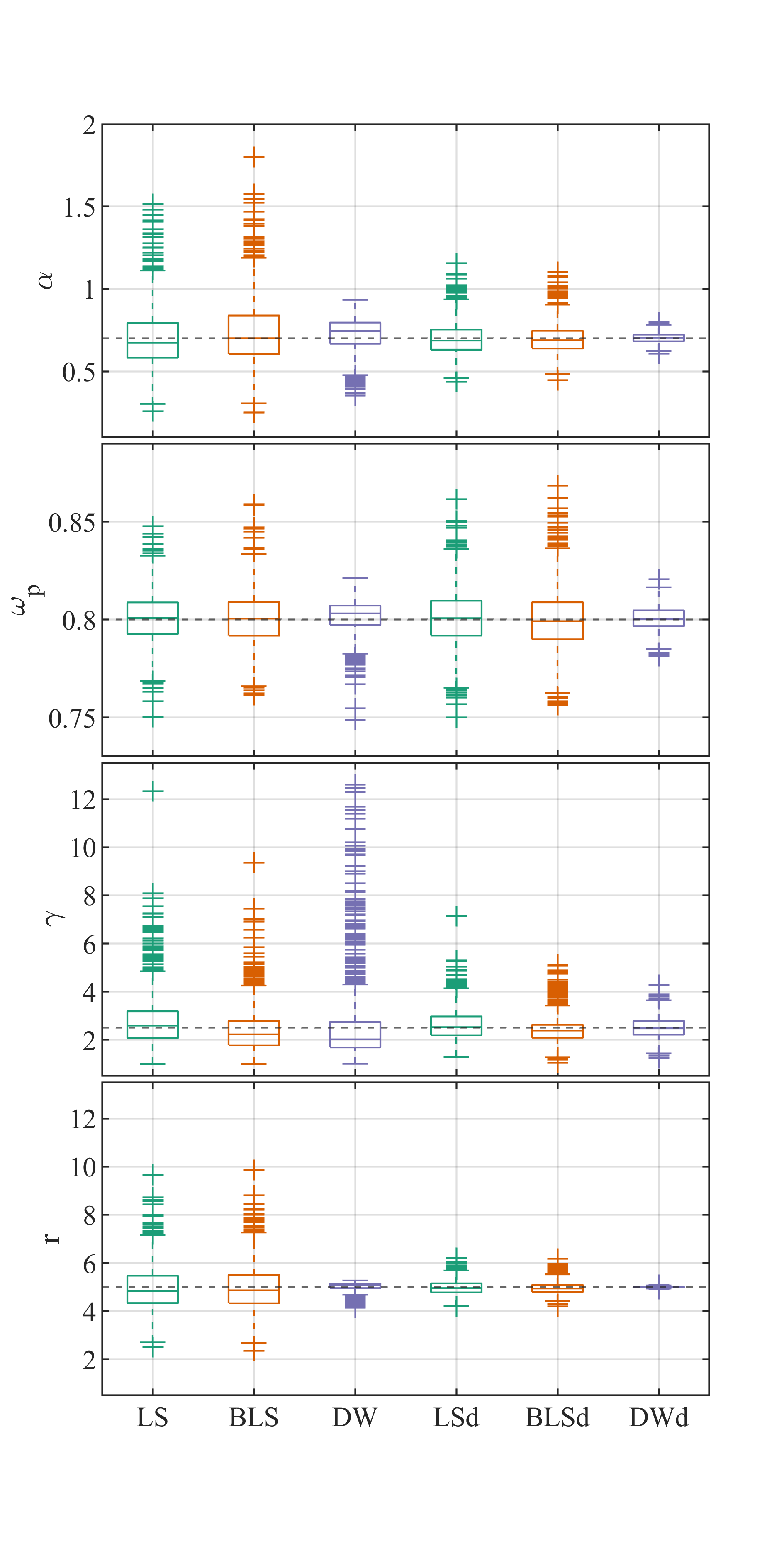}
        \caption{Parameter estimates for 4Hz data.}
        \label{fig:diffBoxplot:4}
    \end{subfigure}%
    \caption{Boxplots of 1000 parameter estimates for 1Hz and 4Hz data. LS, BLS, DW denote least squares, Bartlett least squares and de-biased Whittle respectively, and LSd, BLSd and DWd denote the differenced versions. Note that the values of $\gamma$ for the 4Hz DW have been truncated, the maximum was 19.81.}
    \label{fig:diffBoxplot}
\end{figure}

%% ====================== end of comparison section ====================== %%
\section{Quantifying estimation uncertainty}\label{sec:confidence}
As with any statistical analysis, it is important to quantify the uncertainty in the parameter estimates that are obtained using the de-biased Whittle likelihood. \cite{Sykulski2016a} developed an approach for quantifying the asymptotic variance of the estimator for a single time series. Say that we have $p$ parameters that are written in a vector $\theta$ and are interested in the de-biased Whittle likelihood estimator $\hat\theta_{DW}$ of the true parameter vector $\theta$. \cite{Sykulski2016a} decompose the variance as
\begin{align}
    \var{\hat\theta_{DW}}&\approx\EE{H(\theta)}^{-1}\var{\nabla\ell_{DW}(\theta)}\EE{H(\theta)}^{-1},\label{eq:CI:var:dw}
\end{align}
where $H(\theta)$ denotes the matrix of second derivatives of the log-likelihood function $\ell_{DW}(\theta)$, and $\nabla$ denotes the vector of first derivatives of $\ell_{DW}(\theta)$.
\par
Now for the de-biased Whittle likelihood, we may write
\begin{align}
    \frac{\partial}{\partial\theta_j}\ell_D(\theta)&=-\sum_{\omega\in\Omega}\left\{\left(\frac{1}{\bar{f}_N(\omega\mid\theta)}-\frac{I(\omega)}{\bar{f}_N^{2}(\omega\mid\theta)}\right)\frac{\partial}{\partial\theta_j}\bar{f}_N(\omega\mid\theta)\right\},
\end{align}
for $j=1,\ldots,p$. As will be seen in Appendix~\ref{append:var}, this is required to calculate $\var{\nabla\ell_{DW}(\theta)}$. Furthermore, we may write 
\begin{align}
    \EE{\frac{\partial^2}{\partial\theta_j\partial\theta_k}\ell_D(\theta)}&=-\sum_{\omega\in\Omega}\left\{\frac{1}{\bar{f}_N^2(\omega\mid\theta)}\frac{\partial}{\partial\theta_j}\bar{f}_N(\omega\mid\theta)\frac{\partial}{\partial\theta_k}\bar{f}_N(\omega\mid\theta)\right\},
\end{align}
for $j,k=1,\ldots,p$. Therefore, we require only the first derivative of the expected periodogram in order to compute both parts of the variance decomposition given by \eqref{eq:CI:var:dw}.
\par
Furthermore, the triangle function and Fourier basis are constant with respect to $\theta$, so from~\eqref{eq:EI} we have that
\begin{align}
    \frac{\partial}{\partial\theta_j}\bar{f}_N(\omega\mid\theta)&=\frac{1}{2\pi}\text{Re}\left(2\Delta\sum_{\tau=0}^{N-1}(1-\tau/N)\frac{\partial}{\partial\theta_j}\left\{\gamma(\tau\mid\theta)\right\}\exp\{-i\omega\tau\Delta\}-\Delta\frac{\partial}{\partial\theta_j}\left\{\gamma(0\mid\theta)\right\}\right),
\end{align}
for $j=1,\ldots,p$.
In other words, to calculate the derivative of the expected periodogram, we may first calculate the derivative of the autocovariance, then calculate the expected periodogram of the process with that derivative as its autocovariance. In the case of the generalised JONSWAP spectral density function, when an analytical form for such autocovariance is unavailable, we may instead approximate the derivative of the autocovariance by first approximating the derivatives of the aliased spectral density function, and then of the autocovariance.
\par
In Appendix~\ref{append:deriv} we show that, for the generalised JONSWAP form, partial derivatives of $f_\Delta(\omega\mid\theta)$ can be calculated from the derivatives of the spectral density function, and that the resulting derivatives are continuous.
Therefore, by Leibniz's rule
\begin{align}
    \frac{\partial}{\partial\theta_j}\gamma(\tau\mid\theta)&=\int\limits_{-\pi/\Delta}^{\pi/\Delta} \frac{\partial}{\partial\theta_j}f_\Delta(\omega\mid\theta)\exp\{-i\omega\tau\}\de\omega.
\end{align}
As a result, we may approximate the derivative of the autocovariance from the derivative of the aliased spectral density function in the same way that we approximated the autocovariance from the aliased spectral density function in Section~\ref{sec:practical}.
Therefore, each partial derivative can be computed in the same time it would take to do one function evaluation, which is faster than using numerical approximations for the derivative (e.g. finite differencing methods).
\par
In Appendix \ref{append:var}, we discuss a novel procedure for estimating $\var{\nabla\ell_{DW}(\theta)}$ in a computationally efficient manner. Combining this with the approach for obtaining derivatives, we can estimate the variance of the de-biased Whittle likelihood estimator using \eqref{eq:CI:var:dw}. This enables the computation of approximate confidence intervals for our estimates, by using the asymptotic normality of the estimator and standard theory for confidence intervals.

\section{Maui data}\label{sec:application}
Observations of fetch-limited seas made near the Maui-A platform off the coast of New Zealand between November 1986 and November 1987 have been studied previously by \cite{Ewans1986} and \cite{Ewans1998}.
We revisit these observations to demonstrate additional practical concerns when applying the techniques discussed in this paper to observed wave records.
Observations were taken using a Datawell WAVEC buoy at a location with a water depth of 110m.
Data were recorded for 20 minutes at a sampling rate of 1.28Hz, then there is a gap of 10 minutes while data is processed.
Non-parametric spectral density estimates were recorded for each of these intervals, while the raw time series was only retained for the first 20 minutes of a 3 hour segment. In other words, a 20 minute record is available starting at 00:00, 03:00, 06:00 etc. for each day.
\par
For the sake of illustration, we shall analyse the behaviour of spectral parameters throughout a single storm event, from the 2nd to the 7th of November, 1986.\footnote{Note that this is not intended to make general statements about the behaviour of these parameters. Rather, we are illustrating some of the additional practical concerns faced when analysing recorded wave time series.}
As can be seen from Figure \ref{fig:maui:Boxplot}, throughout the first and second days, the conditions become steadily more severe, before calming down over the proceeding days.
Transforming to the frequency domain, we see the presence of persistent swell, which remains fairly constant throughout all of the records. 
Parameters of the wind-sea component of these bi-modal seas are estimated by first removing contaminated frequencies from the objective function, with optimisation proceeding as previously described.\footnote{This can be thought of as first applying a high pass filter, then fitting a uni-modal model to the resulting filtered record.}
Ideally we would aim to fit a bi-modal model that was designed to also describe the swell component.
However, such an approach is beyond the scope of this paper, though we note that preliminary results for such a procedure are encouraging and are discussed further in Section \ref{sec:discussion}.
It was also observed that frequencies beyond $3 \text{ rad s}^{-1}$  exhibited a different behaviour from the rest of the spectral tail. 
We suspect this may be related to the physical response of the buoy to the waves, though we do not know exactly how such behaviour should be modelled. 
Therefore, we also chose to omit those frequencies in the spectral fitting procedure when using each of the fitting procedures.
After filtering out the aforementioned frequencies, a generalised JONSWAP spectral form was fitted to each of the sea states in turn, using least squares, Bartlett least squares, and the de-biased Whittle likelihood. These estimates are shown in Figure \ref{fig:maui:ls}. 
\begin{figure}[htp!]
    \centering
    \begin{subfigure}[t]{0.475\textwidth}
        \centering
        \includegraphics{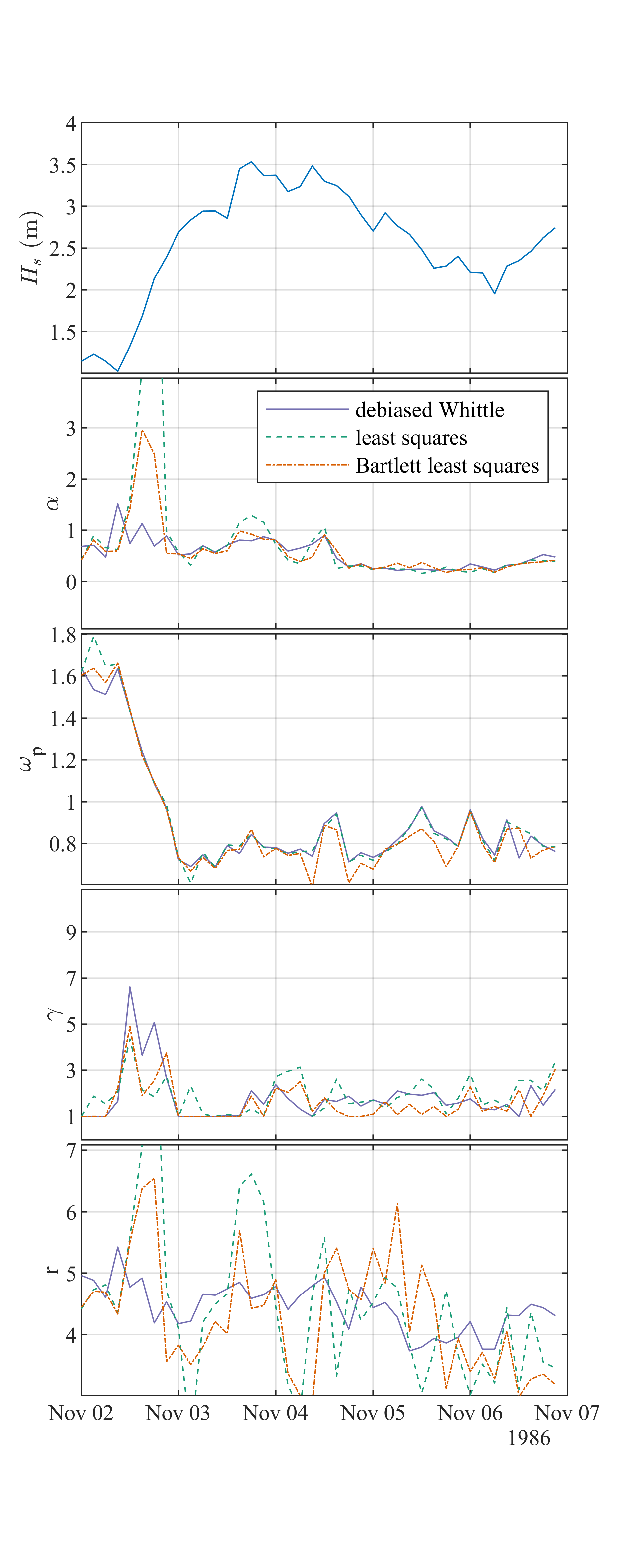}
        \caption{Parameter estimates using least squares, Bartlett least squares and de-biased Whittle likelihood.}
        \label{fig:maui:ls}
    \end{subfigure}
    \hfill
    \begin{subfigure}[t]{0.475\textwidth}
        \centering
        \includegraphics{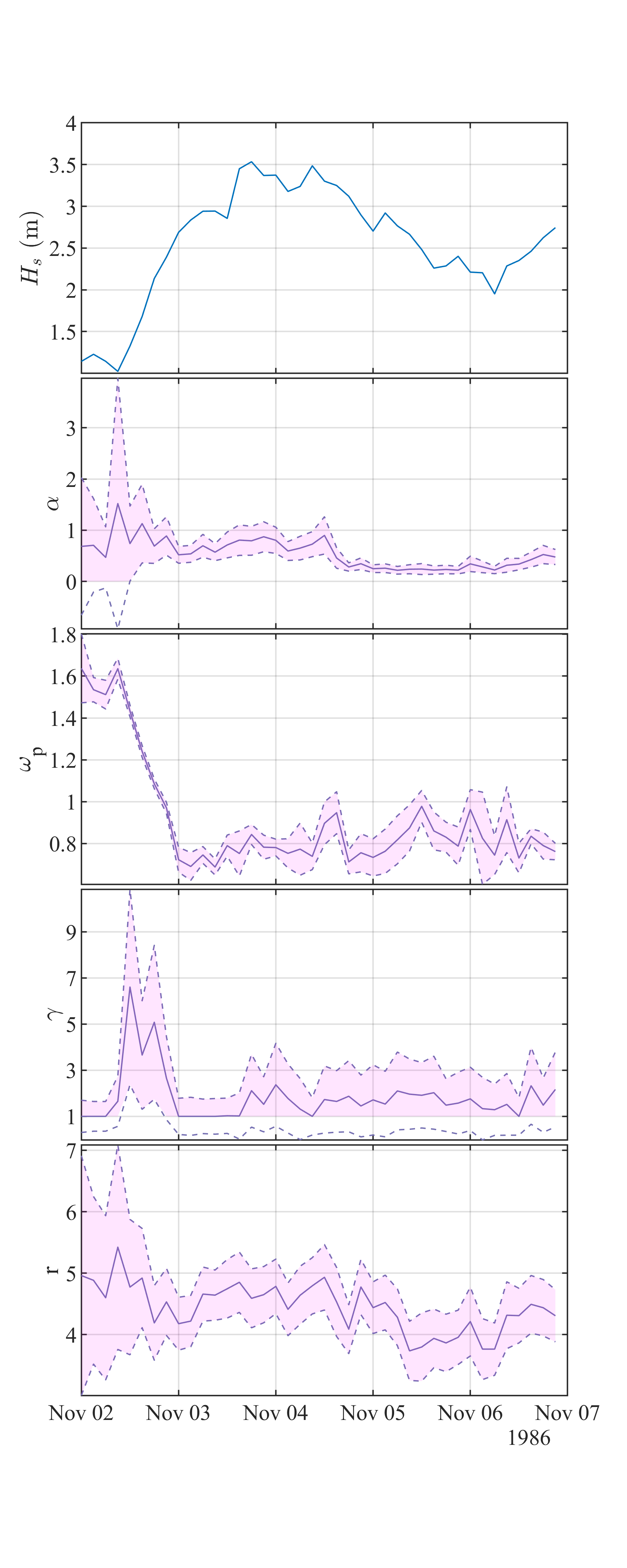}
        \caption{Parameter estimates using de-biased Whittle likelihood with approximate 95\% confidence intervals.}
        \label{fig:maui:dw}
    \end{subfigure}%
    \caption{Plots of $H_s$ and the estimated parameters over time, from the second to the seventh of November, 1986. Estimates are from 20 minute records starting every 3 hours and are plotted at the start time of each record. Note that some of the plots in (a) have been truncated. The maximum of the least squares estimate for $\alpha$ was 9.37 and maximum and minimum for $r$ were 9.66 and 2.28.}
    \label{fig:maui:Boxplot}
\end{figure}
\begin{figure}[htp!]
    \centering
    \begin{subfigure}[t]{\textwidth}
        \centering
        \includegraphics{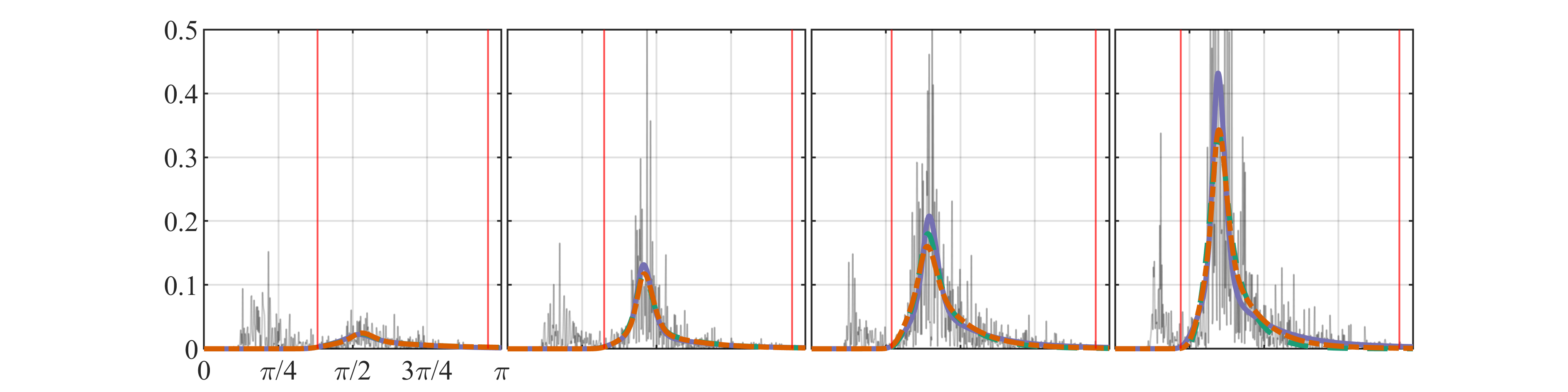}
        \caption{Periodogram and the expected periodogram of the fitted models.}
        \label{fig:maui:diag:I}
    \end{subfigure}
    \hfill
    \begin{subfigure}[t]{\textwidth}
        \centering
        \includegraphics{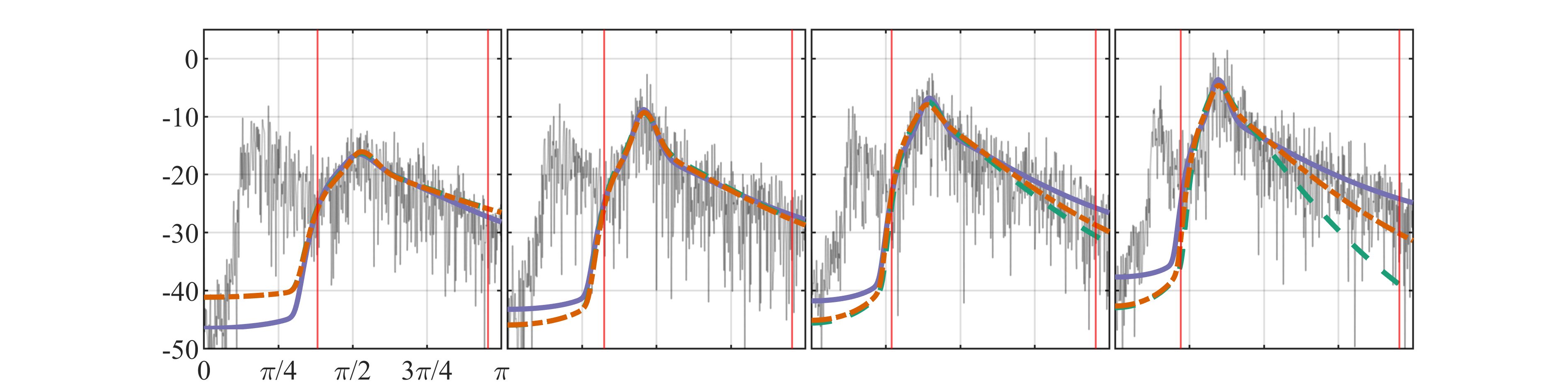}
        \caption{Periodogram and the expected periodogram of the fitted models on a decibel scale.}
        \label{fig:maui:diag:IdB}
    \end{subfigure}
    \hfill
    \begin{subfigure}[t]{\textwidth}
        \centering
        \includegraphics{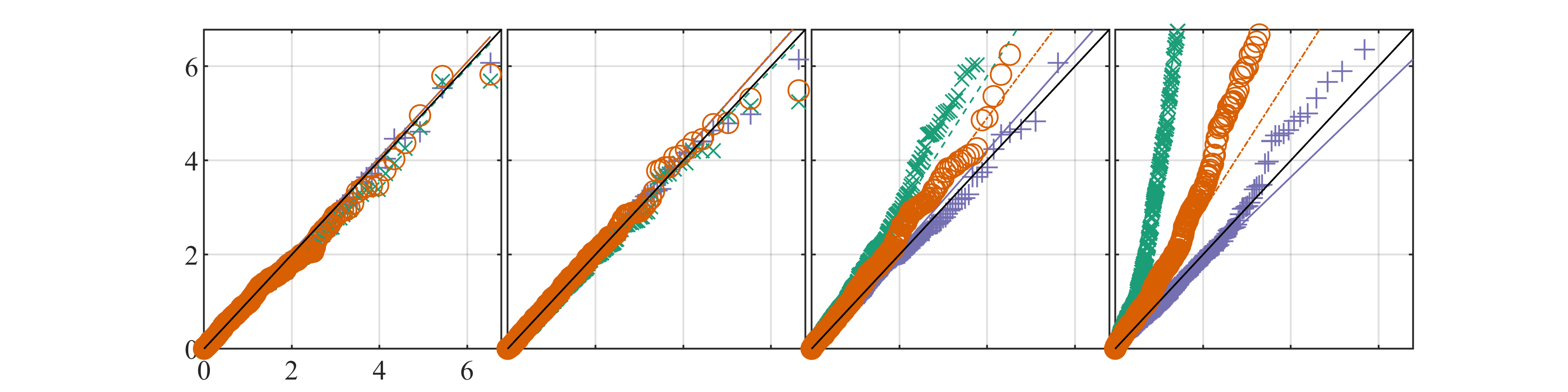}
        \caption{Q-Q plots of $I(\omega)/\EE{I(\omega)\mid\theta}$ for parameter vectors $\theta$ obtained from each fitting procedure.}
        \label{fig:maui:diag:QQ}
    \end{subfigure}
    \caption{Plot of periodograms and fitted models on normal and decibel scale, alongside diagnostic Q-Q plots. Each column is from records recorded on the fourth of November, 1986, starting at 09:00, 12:00, 15:00 and 18:00 respectively. Colours and line styles are used to denote different fitting techniques in the same manor as Figure \ref{fig:maui:ls}. In (c), reference lines are added to each of the fitting methods to aid comparison to the black line denoting $x=y$. The symbols ``$+$'', ``$\times$'' and ``$\circ $'' denote quantiles from the de-biased Whittle likelihood, least squares estimates, and Bartlett least squares estimates, respectively. Note that the plots in (c) have been truncated in the y axis so that they have a 1:1 aspect ratio, so some quantiles are not shown. The untruncated figures can be found on GitHub \citep{Grainger2021a}.}
    \label{fig:maui:QQ}
\end{figure}
\par
From Figure \ref{fig:maui:ls}, we can see that, especially for the tail index, the de-biased Whittle estimates seem more stable than both the least squares and Bartlett least squares estimates.
Figure \ref{fig:maui:dw} shows the de-biased Whittle likelihood estimates and approximate 95\% point-wise confidence intervals, calculated using the technique described in Section \ref{sec:confidence}.
It is worth noting that these confidence intervals are based on the asymptotic distribution of the estimator, in particular, we assume asymptotic normality of the estimator.
This is especially problematic for values near to or on the boundary of the parameter space, as the confidence interval may include values outside the boundary.
To represent this, in Figure \ref{fig:maui:dw}, we only shade regions of the confidence interval that are within the parameter space.
Such confidence intervals enable us to better understand and communicate the uncertainty surrounding parameter estimates. 
This uncertainty, we argue, should be considered when using these parameters as inputs for other related models.
\par
From Figure \ref{fig:maui:dw} we can see that the parameters behave broadly as expected when moving through the storm. We see a steady evolution in peak frequency throughout the growth of the wind sea component, alongside large values of $\gamma$ as the storm is developing. After this point, the values of $\gamma$ reside close to 1.
We can also see that there is a greater level of uncertainty when $H_s$ was low. This is because there is very little energy in the wind-sea component. However, as the storm evolves, this uncertainty quickly reduces.
The uncertainty surrounding the peak enhancement factor remains fairly high through the second half of the first day, even when the uncertainty around other parameters has reduced. This is because when the true value of $\gamma$ is large, at frequencies around the peak the periodogram has more variance, and it is these frequencies that contain the information about $\gamma$.
\par
A useful diagnostic tool for spectral models can be constructed by noticing that (under certain technical conditions) $I(\omega)/\EE{I(\omega)\mid\theta}$ should be approximately exponentially distributed with mean 1 \citep{Brockwell2006}, where $I(\omega)$ denotes the \textit{periodogram} (this result is specific to the periodogram and is not true in general for non-parametric spectral density estimators).
Therefore, if we calculate this ratio for a given model at each Fourier frequency, we can then compare the quantiles of these fractions to the quantiles of an exponential distribution with mean 1 using a Q-Q plot.
Figure \ref{fig:maui:diag:I} and \ref{fig:maui:diag:IdB} show the periodograms and fitted models on both the standard scale and the decibel scale respectively, for 4 time periods during the early development of the storm.
Figure \ref{fig:maui:diag:QQ} shows the corresponding Q-Q plots, which demonstrate how well each of the models fits the data.
The closer the points on the Q-Q plot are to the reference line $y=x$, the better.
We can see from Figure \ref{fig:maui:diag:QQ} that the de-biased Whittle likelihood consistently outperforms the other techniques.
Plots for the remaining sea states are available on GitHub \citep{Grainger2021a}.

%% ====================== end of application section ====================== %%
\section{Discussion and Conclusion}\label{sec:discussion}
    The de-biased Whittle likelihood has been shown to yield major improvements in both the bias and variance of estimated parameters for wind-generated waves.
    In particular, the tail decay index can be estimated to much greater levels of accuracy and precision than when using least squares techniques.
    Such an improvement will enable reliable tracking of the tail decay index's behaviour throughout the course of meteorological events, allowing oceanographers to gain fresh insights into the behaviour of wind-generated waves.
    We have also demonstrated some improvement in the estimation of the peak enhancement factor.
    The de-biased Whittle estimator recovers estimates that are of similar quality to full maximum likelihood, which can be thought of as optimal.
    Since information about the peak enhancement factor is contained in a small region around the peak frequency, it is not surprising that it is so hard to estimate.
    Because there is significant variability in estimates of the peak enhancement factor, it is essential that we can describe uncertainty surrounding parameter estimates when performing an analysis.
    For this reason, the development of computationally efficient techniques for quantifying uncertainty surrounding the estimated parameters is important.
    We have shown that the method presented by \cite{Sykulski2016a} for estimating the variance of the de-biased Whittle estimator can be modified so that it can be computed using 2D Fast Fourier Transforms.
    Combining this with an analytical approach for calculating derivatives, we are able to calculate the uncertainty in parameter estimates accurately and quickly.
    In addition, differencing can be used so that we can cope with high sampling frequency data, which tend to be correlated in the frequency domain.
    \par
    As we discussed previously, when performing simulations we chose to use circulant embedding \citep{Davies1987} to obtain realisations of Gaussian processes with the desired covariance matrix. 
    This is different to the standard method used for simulating linear ocean waves, presented by \cite{Tucker1984} (or to the adapted version due to \cite{Merigaud2018}, which is preferable).
    The first difference is that the standard waves method is only approximate, and does not exactly simulate a Gaussian process with the desired covariance (equivalently spectral density).
    The second is that these methods do not account for the aliasing that we would expect to be present in a record, essentially treating the spectral density as if it is zero beyond the Nyquist frequency.
    Though for many applications this will not matter, the methods discussed in this paper will be sensitive to this difference. If the simulated record does not have the aliasing that should be present, then the parameters estimated using the de-biased Whittle likelihood will seem biased (and standard Whittle will often seem better).
    This is important because if a method such as \cite{Tucker1984} was used to perform the simulation study described in this paper, the results would be different, as likelihood based estimation will be sensitive to this problem (which could be thought of as model miss-specification, as we are trying to fit a model with a non-zero density beyond the Nyquist to a process that has been simulated with no density above the Nyquist).
    Least squares is somewhat invariant to this problem, as it mainly effects frequencies where the spectral density function is small, and least squares is not heavily influenced by such frequencies.
    \par
    We have also developed a \matlab toolbox, implementing the methods discussed in this paper.
    This is available, alongside code to generate the figures in this paper and additional supplementary figures, on GitHub \citep{Grainger2021a}.
    The toolbox contains code to perform each of the fitting techniques discussed in this paper on arbitrary processes, as well as a function implementing the generalised JONSWAP (including first and second derivatives), which can then be used straight out the box.
    On top of this, the user may provide any spectral density function they wish, provided it satisfies the assumptions in Appendix~\ref{append:assumption}, and then use the toolbox to obtain parameter estimates from observations.
    Alongside this, an implementation of circulant embedding is provided, enabling exact simulation of a desired Gaussian process.
    \par
    Though the de-biased Whittle likelihood has been seen to perform well in the estimation of wind-seas, we have yet to fully explore its potential when we wish to describe multi-modal seas (e.g. including one or more swell components).
    In this paper, we were able to describe the wind-sea component of such sea states by first removing swell with a high pass filter.
    However, it would be preferable to develop and fit models that were bi-modal themselves, avoiding the need for partitioning schemes to determine which frequencies should be filtered.
    Indeed, such a procedure could be extended to describe seas with any number of components, using model selection to determine the number of components that are actually present.
    Such techniques could also allow for the development of model-based partitioning schemes, that could to separate overlapping wind-sea and swell components.
    In the development of such multi-modal models, it would be important to consider the interactions between different component weather systems. In particular, techniques such as higher-order spectral analysis could be used to determine if any non-linear interactions were present and help characterise them.
    Such interactions could then also be parameterised and fitted using similar techniques to those discussed in this paper.
    \par
    Another important aspect of wind-generated waves is their directional characteristics. 
    One approach to describing these is to assume some dispersion relation between wave-number and frequency, and then look at frequency direction spectra, which can be estimated, for example, from heave-pitch-roll buoys \citep{Longuet-Higgins1963}.
    However, it may preferable to model the three recorded series instead, requiring the use of a multivariate extension to the de-biased Whittle likelihood. These are all important avenues of further investigation.
    \par
    In summary, we have demonstrated that, by using the de-biased Whittle likelihood, we are able to obtain more accurate and precise estimates of parameters for the wind sea component of a wind-generated wave process in $O(N\log N )$ time.
    Using differencing, we are able to overcome correlations in the periodogram that are common in data sampled at a high frequency.
    Furthermore, we have described a procedure which can be used to obtain the variance of such estimates in $O(N^2\log N )$ time.
    Together these developments will improve the tools available to practitioners, both in terms of fitting models to data and describing the associated uncertainty.
%% ====================== end of discussion section ====================== %%
\section*{Acknowledgements}
J. P. Grainger's research was funded by the UK Engineering and Physical Sciences Research Council (Grant EP/L015692/1) and JBA Trust. 
The work of A. M. Sykulski was funded by the UK Engineering and Physical Sciences Research Council (Grant EP/R01860X/1).
We thank Rob Lamb of JBA and Lancaster University, UK for discussions.
We also thank two anonymous reviewers for their feedback.
%% ====================== end of Acknowledgements   ====================== %%

\appendix
\section{Derivatives of ocean wave spectra}\label{append:deriv}
When calculating the derivatives of the generalised JONSWAP form, special care must be taken around the peak frequency and zero frequency, to check that the generalised JONSWAP form is actually differentiable. We do not show this here, but for the first derivatives there is no issue (though this is not the case for the second derivatives, which do not all exist at the peak).
Consider the generalised JONSWAP spectral form, $f_G(\omega\mid\theta)$. We may write, for $\omega>0$,
\begin{align}
    \frac{\partial}{\partial\omega}\fbit&=\frac{\alpha}{2}\omega^{-r}\expbit\gamma^{\dbit}\left[\frac{r}{\omega_p}\left(\frac{\omega}{\omega_p}\right)^{-s-1}-\frac{r}{\omega}-\frac{\dbit\cdot(\omega-\omega_p)\log\gamma}{\omega_p^2\sigbit^2}\right]\\
    &=\fbit\left[\frac{r}{\omega_p}\left(\frac{\omega}{\omega_p}\right)^{-s-1}-\frac{r}{\omega}-\frac{\dbit\cdot(\omega-\omega_p)\log\gamma}{\omega_p^2\sigbit^2}\right]\\
    &\rightarrow 0 \text{ as }\omega\rightarrow 0^+,\\
    \frac{\partial}{\partial\alpha}\fbit&=\omega^{-r}\exp\left \{-\frac{r}{s}\left(\frac{\omega}{\omega_p}\right)^{-s}\right \}\gamma^{\dbit}/2\\
    &=\fbit/\alpha\\
    &\rightarrow 0 \text{ as }\omega\rightarrow 0^+,\\
    \pard{\omega_p}\fbit&=\frac{\alpha}{2}\omega^{-r}\expbit\gamma^{\dbit}\left[\dbit\log(\gamma)\frac{\omega}{\sigbit^2}\left(\frac{\omega-\omega_p}{\omega_p^3}\right)-\frac{\omega r}{\omega_p^2}\left(\frac{\omega}{\omega_p}\right)^{-s-1}\right]\\
    &=\fbit\left[\dbit\log(\gamma)\frac{\omega}{\sigbit^2}\left(\frac{\omega-\omega_p}{\omega_p^3}\right)-\frac{\omega r}{\omega_p^2}\left(\frac{\omega}{\omega_p}\right)^{-s-1}\right]\\
    &\rightarrow 0 \text{ as }\omega\rightarrow 0^+,\\
    \frac{\partial}{\partial\gamma}\fbit&=\alpha\omega^{-r}\expbit\dbit\gamma^{\dbit-1}/2\\
    &= \fbit\dbit/\gamma\\
    &\rightarrow 0 \text{ as }\omega\rightarrow 0^+,\\
    \pard{r}\fbit&=\alpha\omega^{-r}\expbit\gamma^{\dbit}\left[-\log(\omega)-\frac{1}{s}\left(\frac{\omega}{\omega_p}\right)^{-s}\right]/2\\
    &=\fbit\left[-\log(\omega)-\frac{1}{s}\left(\frac{\omega}{\omega_p}\right)^{-s}\right]\\
    &\rightarrow 0 \text{ as }\omega\rightarrow 0^+.
\end{align}
By the chain rule, for $\omega<0$
\begin{align}
    \pard{\omega}\fbit&=-\pard{\omega}f_G(\minusparbit),\\
    \negderiv{\alpha},\\
    \negderiv{\omega_p},\\
    \negderiv{\gamma},\\
    \negderiv{r}.
\end{align}
Finally, for $\omega=0$, since $f_G(0,\alpha,\omega_p,\gamma,r)=0$, we may write
\begin{align}
    \pard{\omega}f_G(\omega,\mid\theta)\mid_{\omega=0}&=0,\\
    \pard{\alpha}f_G(0,\mid\theta)&=0,\\
    \pard{\omega_p}f_G(0,\mid\theta)&=0,\\
    \pard{\gamma}f_G(0,\mid\theta)&=0,\\
    \pard{r}f_G(0,\mid\theta)&=0.
\end{align}
Therefore, we see that $f_G$ has continuous first derivatives.
\begin{proposition}\label{prop:unifconv}
    For all $x\in\{\omega,\alpha,\omega_p,\gamma,r\}$ the series
    \begin{align}
        \sum_{k=-\infty}^\infty \frac{\partial}{\partial x}f_G\left(\omega+\frac{2\pi k}{\Delta}\mid\theta\right)
    \end{align}
converges uniformly.
\end{proposition}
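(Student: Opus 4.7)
The plan is to apply the Weierstrass M-test to each of the five series in turn. Throughout, fix $\theta$ in the parameter space, restrict $\omega$ to the compact interval $[-\pi/\Delta,\pi/\Delta]$, and set $\omega_k=\omega+2\pi k/\Delta$. Note that for $|k|\geq 1$ we have $|\omega_k|\geq \pi(2|k|-1)/\Delta$, so $|\omega_k|$ grows linearly in $|k|$ uniformly in $\omega$ on that interval. The strategy will then be to produce for each partial derivative a pointwise bound $|\partial_x f_G(\omega_k\mid\theta)|\leq M_k$ with $M_k$ independent of $\omega$ and $\sum_k M_k<\infty$, forcing uniform convergence.

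The key ingredients for the tail bounds are all visible in the explicit formulas for the partial derivatives displayed immediately before the proposition. First, $\gamma^{\delta(\omega\mid\theta)}\in[1,\gamma]$ because $\delta\in(0,1]$ and $\gamma\geq 1$; second, $\exp\{-(r/s)(\omega/\omega_p)^{-s}\}\leq 1$; third, $\delta(\omega\mid\theta)$ itself decays like a Gaussian as $|\omega|\to\infty$, so every bracketed term containing $\delta$ as a factor contributes something exponentially small. Combining these observations with the factor $\omega^{-r}$ common to $f_G$, I would obtain, for $|\omega|$ sufficiently large,
\begin{align*}
\bigl|\tfrac{\partial}{\partial\omega}f_G(\omega\mid\theta)\bigr| &\leq C_1|\omega|^{-r-1},\\
\bigl|\tfrac{\partial}{\partial\alpha}f_G(\omega\mid\theta)\bigr| &\leq C_2|\omega|^{-r},\\
\bigl|\tfrac{\partial}{\partial\omega_p}f_G(\omega\mid\theta)\bigr| &\leq C_3|\omega|^{-r-s-1}+(\text{Gaussian-decaying}),\\
\bigl|\tfrac{\partial}{\partial\gamma}f_G(\omega\mid\theta)\bigr| &\leq C_4|\omega|^{-r}\exp(-c\,\omega^2),\\
\bigl|\tfrac{\partial}{\partial r}f_G(\omega\mid\theta)\bigr| &\leq C_5|\omega|^{-r}\bigl(1+\log|\omega|\bigr),
\end{align*}
with constants $C_i,c$ depending on $\theta$ but not on $\omega$. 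The worst case is $\partial_r f_G$, where the bracket $-\log\omega-\tfrac{1}{s}(\omega/\omega_p)^{-s}$ picks up a logarithmic factor.

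Substituting $\omega_k$ in place of $\omega$ in the above bounds gives $M_k=O(|k|^{-r}\log|k|)$ in the worst case (the other four derivatives yield strictly smaller $M_k$). Since $r>1$ by the standing assumption on the parameter space, the series $\sum_{k\neq 0}|k|^{-r}\log|k|$ converges, while the $k=0$ term is a single continuous function bounded on the compact interval. The Weierstrass M-test then delivers uniform convergence of each of the five series on $[-\pi/\Delta,\pi/\Delta]$, and since the interval can be taken arbitrary compact, uniform convergence holds on any compact subset of $\RR$.

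The main obstacle is bookkeeping rather than mathematics: verifying for $\partial_{\omega_p}$ that the $(\omega-\omega_p)/\omega_p^3$ growth is absorbed by the Gaussian decay of $\delta$, and confirming for $\partial_r$ that the $\log|\omega|$ factor does not destroy summability against $|k|^{-r}$ (it does not, again because $r>1$). Everything else reduces to the elementary observation that $\omega^{-r}$ tails dominate, which is precisely why the condition $r>1$ appears in the parameter space in the first place.
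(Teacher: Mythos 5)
Your proof is correct and takes essentially the same route as the paper's: bound each summand by an $\omega$-independent majorant decaying like $(2|k|-1)^{-r}$ and invoke the Weierstrass M-test, with $r>1$ supplying summability. If anything you are more careful than the paper on one point --- the paper asserts that every partial derivative equals $f_G$ times a \emph{bounded} function, which is not literally true for $\partial f_G/\partial r$ (its multiplier contains $-\log\omega$), whereas you correctly absorb the resulting logarithm into the still-summable majorant $|k|^{-r}\log|k|$.
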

\begin{proof}
    Firstly, due to symmetry, we may consider the series
    \begin{align}
        \sum_{k=L}^\infty \frac{\partial}{\partial x}f_G\left(\omega+\frac{2\pi k}{\Delta}\mid\theta\right)
    \end{align}
    where $L$ is such that $\omega_p<\frac{\pi}{\Delta}+\frac{2\pi L}{\Delta}$.
    Now write
    \begin{align}
        g_{k}(\omega\mid\theta)&=f_G\left(\omega+\frac{2\pi k}{\Delta}\mid\theta\right)\quad\text{and}\\
        g_{k,x}(\omega\mid\theta)&=\frac{\partial}{\partial x}f_G\left(\omega+\frac{2\pi k}{\Delta}\mid\theta\right).
    \end{align}
    Next notice that
    \begin{align}
        |g_k(\omega\mid\theta)|&\leq\alpha\gamma\left(\omega+\frac{2\pi k}{\Delta}\right)^{-r}\\
        &\leq\alpha\gamma\left(\frac{-\pi}{\Delta}+\frac{2\pi k}{\Delta}\right)^{-r}\\
        &=\alpha\gamma\left(\frac{\pi}{\Delta}\right)^{-r}\left(2k-1\right)^{-r}.
    \end{align}
    Let $M_k=\alpha\gamma\left({\pi}/{\Delta}\right)^{-r}\left(2k-1\right)^{-r}$. It now suffices to notice that each of the partial derivatives can be written as $f_G(\omega\mid\theta)$ multiplied by some other function. Furthermore, each of these functions can be bounded. Therefore we may write
    \begin{align}
        |g_{k,x}(\omega\mid\theta)|\leq C_x M_k,
    \end{align}
    for some constants $C_x>0$. Finally
    \begin{align}
        \sum_{k=L}^\infty M_k <\infty,
    \end{align}
    so by the Weierstrass M-test, we have uniform convergence. By extension we can see that
    \begin{align}
        \sum_{k=-\infty}^\infty \frac{\partial}{\partial x}f_G\left(\omega+\frac{2\pi k}{\Delta}\mid\theta\right)
    \end{align}
    is also uniformly convergent.
\end{proof}
\begin{proposition}
    For all $x\in\{\omega,\alpha,\omega_p,\gamma,r\}$ 
    \begin{align}
        \frac{\partial}{\partial x}f_\Delta\left(\omega\mid\theta\right)&=\sum_{k=-\infty}^\infty \frac{\partial}{\partial x}f_G\left(\omega+\frac{2\pi k}{\Delta}\mid\theta\right).
    \end{align}
    Furthermore, $ \frac{\partial}{\partial x}f_\Delta\left(\omega\mid\theta\right)$ is continuous.
\end{proposition}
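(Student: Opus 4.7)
The plan is to deduce this proposition directly from Proposition \ref{prop:unifconv} via a standard ``differentiate term by term'' theorem from real analysis. Recall the theorem: if $\{h_k\}$ is a sequence of continuously differentiable functions such that $\sum h_k$ converges pointwise at some point, and $\sum h_k'$ converges uniformly on the domain, then $\sum h_k$ is differentiable and $(\sum h_k)' = \sum h_k'$; moreover the derivative is continuous as a uniform limit of continuous functions.

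First I would set $h_k(\omega\mid\theta) = f_G(\omega + 2\pi k/\Delta \mid \theta)$ and invoke the definition of the aliased spectral density, namely $f_\Delta(\omega \mid \theta) = \sum_{k=-\infty}^\infty h_k(\omega \mid \theta)$ from equation \eqref{eq:sdf:aliasing}. The pointwise convergence at every $\omega$ (and every parameter choice) is guaranteed because the tail of $f_G$ decays like $\omega^{-r}$ with $r > 1$, the same tail bound used in Proposition \ref{prop:unifconv}. Next I would note that for each $x \in \{\omega,\alpha,\omega_p,\gamma,r\}$ the function $\partial h_k/\partial x$ is continuous, as established by the explicit formulae computed at the start of this appendix (which verify continuity of the first partials of $f_G$, including at $\omega = 0$, combined with continuity in the parameters which is immediate from the analytic expressions).

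Then I would apply Proposition \ref{prop:unifconv} to conclude that $\sum_k \partial h_k/\partial x$ converges uniformly, on the relevant compact subsets of the $(\omega,\theta)$ domain (uniform convergence on compacta is all that is required to interchange the derivative with the sum at each fixed point). The classical term-by-term differentiation theorem then yields
\begin{align}
    \frac{\partial}{\partial x} f_\Delta(\omega \mid \theta) &= \frac{\partial}{\partial x} \sum_{k=-\infty}^\infty h_k(\omega \mid \theta) = \sum_{k=-\infty}^\infty \frac{\partial}{\partial x} f_G\!\left(\omega + \frac{2\pi k}{\Delta} \mid \theta\right),
\end{align}
which is the first claim. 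For continuity of this derivative, I would invoke the standard fact that a uniformly convergent series of continuous functions has a continuous sum; since each summand is continuous and the series converges uniformly by Proposition \ref{prop:unifconv}, the conclusion follows immediately.

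I do not anticipate a serious obstacle since the analytical groundwork has been laid: Proposition \ref{prop:unifconv} supplies the uniform convergence and the earlier explicit computations confirm continuity of each summand. The only mild care needed is to be explicit about which domain the uniform convergence is taken over (it is enough to have it locally uniformly in $(\omega,\theta)$, which the Weierstrass M-test argument in Proposition \ref{prop:unifconv} does in fact deliver, since the bounds $C_x M_k$ are uniform in $\omega$ and continuous in the parameters).
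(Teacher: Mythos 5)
Your argument is correct and is essentially identical to the paper's proof: both deduce the interchange of derivative and sum from Proposition \ref{prop:unifconv} (uniform convergence of the differentiated series), the pointwise convergence of $\sum_k f_G(\omega+2\pi k/\Delta\mid\theta)$, and the continuous differentiability of $f_G$, via the classical term-by-term differentiation theorem (the paper cites Theorem~4.4.20 of Trench for this), and both obtain continuity of the result from the uniform limit theorem. Your added remark about working locally uniformly in $(\omega,\theta)$ is a harmless refinement, not a different route.
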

\begin{proof}
    This first part follows from Proposition \ref{prop:unifconv}, the convergence of $\sum_{k=-\infty}^\infty f_G\left(\omega+2\pi k/\Delta\mid\theta\right)$, the continuous differentiability of $f_G$, and Theorem~4.4.20 of \cite{Trench2013}. The continuity follows from the continuity of the derivatives of $f_G$ and the uniform limit theorem.
\end{proof}
\section{Computing the variance of the first derivative}\label{append:var}
\cite{Sykulski2016a} decompose the variance of the first derivative of the de-biased Whittle likelihood as follows:
\begin{align}
    \var{\pard{\theta_i}\ell_{DW}(\theta)}&=\sum_{j=1}^N\sum_{k=1}^N a_{ij}(\theta) a_{ik}(\theta) \cov{I(\omega_j),I(\omega_k)},
\end{align}
where $\omega_j, \omega_k$ denote Fourier frequencies and
\begin{align}
    a_{ij}(\theta)&=\frac{\partial \bar f_N(\omega_j\mid\theta)}{\partial \theta_i}\frac{1}{\bar f^2_N(\omega_j\mid\theta)}.
\end{align}
To estimate the variance, they propose using
\begin{align}
    \varhat{\pard{\theta_i}\ell_{DW}(\theta)}&=\sum_{j=1}^N\sum_{k=1}^N a_{ij}(\hat\theta_{DW}) a_{ik}(\hat\theta_{DW}) \covhat{I(\omega_j),I(\omega_k)}.
\end{align}
For ocean wave models, $a_{ij}(\hat\theta_{DW})$ can be easily computed by using the results from Section  \ref{append:deriv}. So, from \cite{Sykulski2016a}, we are interested in computing
\begin{align}
    \hat{\text{cov}}\left( I(\omega_j),I(\omega_k) \right)&= \left| \frac{\Delta}{2\pi N}\int_{-\pi/\Delta}^{\pi/\Delta} f_\Delta(\omega'\mid\theta)D_N(\Delta(\omega_j-\omega))D_N^*(\Delta(\omega_k-\omega))\de \omega'\right|^2,
\end{align}
where
\begin{align}
    D_N(v)&=\frac{\sin{(Nv/2)}}{\sin(v/2)}e^{-iv(N+1)/2}.
\end{align}
To do this efficiently, first note that
\begin{align}
    D_N(v)&=\frac{\sin{(Nv/2)}}{\sin(v/2)}e^{-iv(N+1)/2}
    =\frac{e^{iNv/2}-e^{-iNv/2}}{e^{iv/2}-e^{-iv/2}}e^{-iv(N+1)/2}\\
    &=\frac{e^{-iNv/2}}{e^{-iv/2}}\frac{e^{iNv}-1}{e^{iv}-1}e^{-iv(N+1)/2}
    =e^{-i(N-1)v/2}e^{-iv(N+1)/2}\sum_{s=0}^{N-1}e^{isv} & \text{ by geometric series}\\
    &=e^{-iNv}\sum_{s=0}^{N-1}e^{isv}.
\end{align}
It is also convenient to write
\begin{align}
    D^*_N(v) &= \frac{\sin{(Nv/2)}}{\sin(v/2)}e^{iv(N+1)/2}
    =e^{iv}\sum_{s=0}^{N-1}e^{isv}.
\end{align}
Now, consider the function
\begin{align}
    h_{jk}(\omega)&=f_\Delta(\omega\mid\theta) D_N(\Delta(\omega_j-\omega)) D_N^*(\Delta(\omega_k-\omega))\\
    &=f_\Delta(\omega\mid\theta)\left(e^{-iN(\Delta(\omega_j-\omega))} \sum_{s=0}^{N-1}e^{is(\Delta(\omega_j-\omega))}\right) \left(e^{i(\Delta(\omega_k-\omega))} \sum_{r=0}^{N-1}e^{ir(\Delta(\omega_k-\omega))}\right)\\
    &=e^{-i\Delta(N\omega_j-\omega_k)} \sum_{s=0}^{N-1}\sum_{r=0}^{N-1} e^{is\Delta\omega_j}e^{ir\Delta\omega_k} \left[f_\Delta(\omega\mid\theta )e^{i\Delta(N-1)\omega} e^{-i\Delta(s+r)\omega}\right],
\end{align}
where we have rearranged for later convenience. We can now see, by linearity of integration, that
\begin{align}
    \int_{-\pi/\Delta}^{\pi/\Delta}h_{jk}(\omega)\de\omega &= e^{-i\Delta(N\omega_j-\omega_k)} \sum_{s=0}^{N-1}\sum_{r=0}^{N-1} e^{is\Delta\omega_j}e^{ir\Delta\omega_k} \int_{-\pi/\Delta}^{\pi/\Delta}f_\Delta(\omega\mid\theta)e^{i\Delta(N-1)\omega}e^{-i\Delta(s+r)\omega}\de\omega.
\end{align}
Thus for $r=0,\ldots,N-1$ and $s=0\ldots,N-1$ we must calculate
\begin{align}
    \int_{-\pi/\Delta}^{\pi/\Delta} f_\Delta(\omega\mid\theta) e^{i\Delta(N-1)\omega} e^{-i\Delta(r+s)\omega}\de\omega.
\end{align}
Notice that by letting $t=r+s$ we need to calculate the following integral for $t=0,\ldots,(2N-1)-1$
\begin{align}
    \int_{-\pi/\Delta}^{\pi/\Delta} f_\Delta(\omega\mid\theta) e^{i\Delta(N-1)\omega} e^{-i\Delta t\omega}\de\omega.
\end{align}
For clarity let
\begin{align}
    q(\omega)&=f_\Delta(\omega\mid\theta) e^{i\Delta(N-1)\omega}
\end{align}
then we need to compute
\begin{align}
    Q(t)&=\int_{-\pi/\Delta}^{\pi/\Delta} q(\omega) e^{-i\Delta t\omega}\de\omega,\label{eq:Q(t)def}
\end{align}
for $t=0,\ldots,2N-2$.
We notice that this is a Fourier transform and we can obtain an approximation of this integral at each of the required t by doing an fft on the relevant length $2N-1$ sequence (the details of this are discussed in Section  \ref{append:var:approx}). This means that we only need to do one Fourier transform, and can then sub this into the previous sums.
Now we require
\begin{align}
    \int_{-\pi/\Delta}^{\pi/\Delta}h_{jk}(\omega)\de\omega &= e^{-i\Delta(N\omega_j-\omega_k)} \sum_{s=0}^{N-1}\sum_{r=0}^{N-1} e^{is\Delta\omega_j} e^{ir\Delta\omega_k} Q(r+s)\\
    &=e^{-i\Delta(N\omega_j-\omega_k)} \sum_{s=0}^{N-1}\sum_{r=0}^{N-1} \tilde Q(r,s)e^{i\Delta 2\pi (rk+sj)/N\Delta}\\
    &=e^{-i\Delta(N\omega_j-\omega_k)} \sum_{s=0}^{N-1}\sum_{r=0}^{N-1} \tilde Q(r,s)e^{i2\pi (rk+sj)/N}\\
\end{align}
where $\tilde Q(r,s)=Q(r+s)$. This is a 2D Fourier transform, and can be computed efficiently for $j,k=0,\ldots,N-1$. This means we take $O(N^2\log N)$ time. Importantly, libraries exist to compute this very quickly in most programming languages.
Finally
\begin{align}
    \hat{\text{cov}}\left( I(\omega_j),I(\omega_k) \right)&= \left|\frac{\Delta}{2\pi N}\int_{-\pi/\Delta}^{\pi/\Delta}h_{jk}(\omega)\de\omega\right|^2\\
    &=\left|\frac{\Delta}{2\pi N} e^{-i\Delta(N\omega_j-\omega_k)} \sum_{s=0}^{N-1}\sum_{r=0}^{N-1}\tilde Q(r,s)e^{i2\pi (rk+sj)/N}\right|^2\\
    &=\left|\frac{\Delta}{2\pi N}\sum_{s=0}^{N-1}\sum_{r=0}^{N-1}\tilde Q(r,s)e^{i2\pi (rk+sj)/N}\right|^2.
\end{align}
\subsection{Approximating $Q(t)$}\label{append:var:approx}
Note that we must still approximate $Q(t)$ at $t=0,\ldots,2N-2$. 
We aim to approximate the integral in \eqref{eq:Q(t)def}. To achieve this, we use the Riemann sum given by
\begin{align}
    \bar Q(t) &= \frac{2\pi}{M\Delta}\sum_{j=0}^{M-1} q(2\pi j/M\Delta)e^{-i(t\Delta)(2\pi j/M\Delta)}.
\end{align}
An FFT will produce values of $\bar Q(t)$ for $t=0,1,\ldots,(M-1)$. Therefore, provided $M\geq 2N-1$ we can obtain approximations of the desired integrals.
\section{Assumptions for de-biased Whittle}\label{append:assumption}
    Theorem 1 of \cite{Sykulski2016a} gives assumptions for the optimal convergence of the de-biased Whittle likelihood estimator. The first of these assumptions is that the parameter space is compact with non-null interior and that the true value of the parameter lies in the interior of the parameter space. This is not strictly satisfied by the generalised JONSWAP spectral density; however, for physical reasons the parameter space can be restricted so that it is compact. The greater problem is that a value of $\gamma=1$, which corresponds to the fully developed sea described by \cite{Pierson1964}, may occur in nature and would violate this assumption. In Section  \ref{sec:comparison:robust} we demonstrate that, in this case, the de-biased Whittle likelihood estimator still performs well.
    \par
    The second assumption is that the spectral density of the aliased process is bounded above and is bounded below by some positive real number. This is satisfied by the generalised JONSWAP spectral density. The aliased spectrum is bounded below by the non-aliased spectrum, and is bounded above by the variance of $X_t$ (which is finite).
    Therefore, the only frequency remaining to consider is zero, as the spectral density is zero when $\omega=0$. However, contributions from above the Nyquist frequency are positive, and as such the aliased spectral density at zero will also be positive.
    \par
    Assumption three relates to parameter identifiability. Informally, this requires the aliased spectral density function to be different for different choices of $\theta$.
    Intuitively, provided the sampling interval is sufficiently small (so that the peak frequency is smaller than the Nyquist), then each of the parameters is changing the shape of the generalised JONSWAP spectrum in a different way, such that the parameters will in general be identifiable for a sufficient sample size.
    \par
    The fourth assumption states that the aliased spectral density function must be continuous in $\theta$ and Riemann integrable in $\omega$. This is satisfied for the generalised JONSWAP spectral form as it is continuous in $\theta$ and $\omega$.
    \par
    Assumption five states that the expected periodogram has two continuous derivatives in $\theta$, and that these derivatives are bounded uniformly for all $N$. Furthermore the first derivatives are required to have $\Theta(N)$ non-zero frequencies.
    Strictly speaking, the generalised JONSWAP is not twice differentiable (the second derivative does not exist at the peak, due to the step function $\sigbit$).
    However, a simple adaptation can be made, by replacing $\sigbit$ with
    \begin{align}
        \bar \sigma (\omega\mid\theta) &= \sigma_1+(\sigma_2-\sigma_1)\left(\frac{1}{2}+\frac{1}{\pi}\arctan\left(C\left(\omega-\omega_p\right)\right)\right),
    \end{align}
    where $C\in(0,\infty)$ is chosen to be large. This has the advantage of having continuous second derivatives, and is essentially equivalent to the generalised JONSWAP, because $C$ can be chosen such that it would be impossible to distinguish between the two models from observed data. 
    Since this part of the generalised JONSWAP was developed empirically, there is no practical difference in using the step function over this reformulation. 
    Indeed, in keeping with the general philosophy of statistical modelling --- ``all models are wrong, but some models are useful'' --- we suggest that this alternative formulation is just as appropriate as the generalised JONSWAP, but more useful here as it allows us to show this assumption is satisfied.
    By similar arguments to those presented in Appendix~\ref{append:deriv} and Appendix~\ref{append:var}, we can see that the aliased spectral density has continuous second derivatives. The autocovariance then has continuous second derivatives by a similar argument to that in Section~\ref{sec:confidence}, and noting that the second derivative of the aliased spectral density function is integrable, and so therefore the second derivative of the autocovariance must be continuous (as they can be shown to be Fourier pairs). Therefore, the expected periodogram also has continuous second derivatives by linearity of derivatives and the fact that linear combinations of continuous functions are continuous.
    These derivatives are also bounded uniformly for all $N$ due to the compactness of the set of frequencies $\Omega$ and the parameter space $\Theta$.
    \par
    The final assumption states that the process in question is fourth-order stationary with finite fourth order moments and absolutely summable fourth order cumulants.
    Clearly this is true for a Gaussian process (as second order stationarity implies strict stationarity for Gaussian) and it is also true for some non-linear processes, such as the class of non-linear processes discussed by \cite{Sykulski2016a}.
    \par
    Finally, in our simulations the estimator based on the de-biased Whittle likelihood performs in broad agreement with the theory, for example we observed desirable properties such as root $N$ convergence when exploring different values of $N$.
    We also did not find any problems with local minima during optimisation for any of the record lengths considered, suggesting that the parameters of the generalised JONSWAP form are indeed identifiable in practice for sufficiently long records.
\newpage
\bibliography{bib/Applied,bib/code, bib/Bloomfield}

\end{document}